\numberwithin{equation}{section}
\begin{document}

\newtheorem{theorem}{Theorem}[section]
\newtheorem{corollary}[theorem]{Corollary}
\newtheorem{lemma}[theorem]{Lemma}
\newtheorem{proposition}[theorem]{Proposition}

\newcommand{\adiffop}{A$\Delta$O}
\newcommand{\adiffops}{A$\Delta$Os}

\newcommand{\be}{\begin{equation}}
\newcommand{\ee}{\end{equation}}
\newcommand{\bea}{\begin{eqnarray}}
\newcommand{\eea}{\end{eqnarray}}
\newcommand{\sh}{{\rm sh}}
\newcommand{\ch}{{\rm ch}}
\newcommand{\De}{\Delta}
\newcommand{\de}{\delta}
\newcommand{\te}{\tilde{e}}
\newcommand{\ty}{\tilde{y}}
\newcommand{\Z}{{\mathbb Z}}
\newcommand{\N}{{\mathbb N}}
\newcommand{\C}{{\mathbb C}}
\newcommand{\Cs}{{\mathbb C}^{*}}
\newcommand{\R}{{\mathbb R}}
\newcommand{\Q}{{\mathbb Q}}
\newcommand{\T}{{\mathbb T}}
\newcommand{\re}{{\rm Re}\, }
\newcommand{\im}{{\rm Im}\, }
\newcommand{\cW}{{\cal W}}
\newcommand{\cJ}{{\cal J}}
\newcommand{\cE}{{\cal E}}
\newcommand{\cA}{{\cal A}}
\newcommand{\cR}{{\cal R}}
\newcommand{\cP}{{\cal P}}
\newcommand{\cM}{{\cal M}}
\newcommand{\cN}{{\cal N}}
\newcommand{\cI}{{\cal I}}
\newcommand{\cMs}{{\cal M}^{*}}
\newcommand{\cB}{{\cal B}}
\newcommand{\cD}{{\cal D}}
\newcommand{\cC}{{\cal C}}
\newcommand{\cG}{{\cal G}}
\newcommand{\cL}{{\cal L}}
\newcommand{\cF}{{\cal F}}
\newcommand{\cH}{{\cal H}}
\newcommand{\cO}{{\cal O}}
\newcommand{\cS}{{\cal S}}
\newcommand{\cT}{{\cal T}}
\newcommand{\cU}{{\cal U}}
\newcommand{\cQ}{{\cal Q}}
\newcommand{\cV}{{\cal V}}
\newcommand{\cK}{{\cal K}}
\newcommand{\cZ}{{\cal Z}}

\newcommand{\rE}{{\rm E}}
\newcommand{\rF}{{\rm F}}
\newcommand{\rI}{{\rm I}}
\newcommand{\rR}{{\rm R}}

\newcommand{\fm}{\mathfrak{m}}
\newcommand{\intR}{\int_{-\infty}^{\infty}}
\newcommand{\intI}{\int_{0}^{\pi/2r}}
\newcommand{\limp}{\lim_{\re x \to \infty}}
\newcommand{\limn}{\lim_{\re x \to -\infty}}
\newcommand{\limpn}{\lim_{|\re x| \to \infty}}
\newcommand{\diag}{{\rm diag}}
\newcommand{\Ln}{{\rm Ln}}
\newcommand{\Arg}{{\rm Arg}}
\newcommand{\LHP}{{\rm LHP}}
\newcommand{\RHP}{{\rm RHP}}
\newcommand{\UHP}{{\rm UHP}}
\newcommand{\Res}{{\rm Res}}
\newcommand{\ep}{\epsilon}
\newcommand{\ga}{\gamma}
\newcommand{\sing}{{\rm sing}}

\title{Joint eigenfunctions for the relativistic Calogero-Moser Hamiltonians of hyperbolic type.\\ III.  Factorized asymptotics}
 
\author{Martin Halln\"as\footnote{E-mail: hallnas@chalmers.se} \\ Department of Mathematical Sciences \\ Chalmers University of Technology and the University of Gothenburg\\ SE-412 96 Gothenburg, Sweden \\ and \\Simon Ruijsenaars\footnote{E-mail: siruleeds@gmail.com} \\ School of Mathematics \\ University of Leeds \\ Leeds LS2 9JT, UK}

\date{\today}

\maketitle

\begin{abstract}
In the two preceding parts of this series of papers, we introduced and studied a recursion scheme for constructing joint eigenfunctions $J_N(a_+, a_-,b;x,y)$ of the Hamiltonians arising in the integrable $N$-particle systems of hyperbolic relativistic Calogero-Moser type. We focused on the first steps of the scheme in Part I, and on the cases $N=2$ and $N=3$ in Part II. In this paper, we determine the dominant asymptotics of a similarity transformed function $\rE_N(b;x,y)$ for $y_j-y_{j+1}\to\infty$, $j=1,\ldots, N-1$, and thereby confirm the long standing conjecture that the particles in the hyperbolic relativistic Calogero-Moser system exhibit soliton scattering. This result generalizes a main result in Part II to all particle numbers $N>3$.  
\end{abstract}

\tableofcontents

\section{Introduction}
In the first part \cite{HR14} of this series of papers, we presented and developed the first steps in a recursion scheme for constructing joint eigenfunctions $J_N(a_+,a_-,b;x,y)$ for the commuting analytic difference operators (henceforth A$\De$Os) associated with the integrable $N$-particle systems of hyperbolic relativistic Calogero-Moser type. More specifically, we presented the formal features of the scheme, explicitly demonstrated its arbitrary-$N$ viability for the `free' cases and established holomorphy domains and uniform decay bounds that were sufficient to render the scheme rigorous. Motivated by results on the `free' cases as well as the $N=2$ case, which can be gleaned from \cite{R11}, we also detailed several conjectured features of the joint eigenfunctions $J_N$.

In the second part \cite{HR18}, we proved a number of these conjectures in the cases $N=2$ and $N=3$. Indeed, we established global meromorphy, a number of invariance properties and a duality relation, and undertook a detailed study of asymptotic behavior. The purpose of this third part is to generalize the results on  asymptotics to all particle numbers $N>3$. We shall make use of previous results in this series of papers without further ado, referring back to sections and equations in \cite{HR14} and \cite{HR18} by using the prefix I and II, respectively.

To a large extent, we can follow our approach in the $N=3$ case, but the technical difficulties we encounter are considerably more involved. Important auxiliary results have been isolated in Lemma \ref{Lem:auxN} and Theorem \ref{Thm:IPLbou}. The latter theorem allows us to avoid the use of the bound II~(2.73) on $E_2$ that we used for the $N=3$ case, cf.~the proof of II~Theorem~3.7. This amounts to one of several simplifications of our $N=3$ results in II~Section 3. We could not obtain a counterpart of the bound II (2.73) for $E_N$ with $N>2$, but fortunately Theorem \ref{Thm:IPLbou} obviates this snag as well.

In order to describe the results and organization of this paper in more detail, we need to first recall the construction of $J_N$ from $J_{N-1}$. Following I and II, we take $a_+,a_-\in(0,\infty)$, use further parameters
\be\label{aconv}
\alpha\equiv 2\pi/a_+a_-,\ \ \ \ a\equiv (a_++a_-)/2,
\ee 
\be\label{asl}
a_s\equiv\min (a_+,a_-),\ \ \ a_l\equiv\max (a_+,a_-),
\ee
and work with $b$-values in the strip
\be
S_a\equiv \{b\in\C \mid \re b\in (0,2a)\}.
\ee
In addition, we make extensive use of the generalized Harish-Chandra $c$-function
\be\label{c}
c(b;z)\equiv \frac{G(z+ia-ib)}{G(z+ia)}=c(b;-z-2ia+ib),
\ee
and its multivariate version
\be\label{CN}
C_N(b;x)\equiv \prod_{1\leq j<k\leq N}c(b;x_j-x_k),\ \ \ N\geq 2.
\ee
Here $G(z)\equiv G(a_+,a_-;z)$ denotes the hyperbolic gamma function, whose salient features are reviewed in I Appendix A and II Appendix A. In particular, in \eqref{c} and frequently below, we use the reflection equation $G(-z)=1/G(z)$, cf.~I (A.6). (To unburden notation, we usually suppress the dependence on the parameters $a_+,a_-$; also, the dependence on $N$ and $b$ is often omitted when no ambiguities are likely to arise.)

In the  construction of $J_N$ from $J_{N-1}$ in I Section 6, we encountered the integrand
\be\label{IN}
I_N(b;x,y,z)\equiv  W_{N-1}(b;z)\cS^\sharp_N(b;x,z)J_{N-1}(b;z,(y_1-y_N,\ldots,y_{N-1}-y_N)),
\ee
with weight function
\be\label{WN}
W_N(b;z)\equiv 1/C_N(b;z)C_N(b;-z),
\ee
and kernel function (cf.~I (A.6))
\be\label{cSN}
\begin{split}
\cS^\sharp_N(b;x,z) &\equiv \prod_{j=1}^N\prod_{k=1}^{N-1}\frac{G(z_k-x_j-ib/2)}{G(z_k-x_j+ib/2)}\\
&= \prod_{j=1}^N\prod_{k=1}^{N-1}c(b;z_k-x_j-ia+ib/2).
\end{split}
\ee
More precisely, I (6.6) yields the representation
\be\label{JN}
J_N(b;x,y) =  \frac{\exp(i\alpha y_N(x_1+\cdots+x_N))}{(N-1)!}\int_{\R^{N-1}} dz\, I_N(b;x,y,z),\ \ \ b\in S_a,\ \ x,y\in\R^N.
\ee

Defining
\be\label{XY}
X_N\equiv \frac{1}{N}\sum_{j=1}^Nx_j,\ \ Y_N\equiv \frac{1}{N}\sum_{j=1}^Ny_j,\ \  x^{(N)}_j\equiv  x_j-X_N,\ \ \ y^{(N)}_j\equiv  y_j-Y_N,\ \ \ j=1,\ldots,N,
\ee
a straightforward induction argument revealed another important representation that we have occasion to invoke below, namely,
\begin{align}
\label{JNrep2}
J_N(x,y) &= \exp(Ni\alpha X_NY_N)J_N^r(x,y),\\
\label{JNr}
J_N^r(x,y) &\equiv \frac{1}{(N-1)!}\int_{\R^{N-1}}dz\, W_{N-1}(z)\cS^\sharp_N(x^{(N)},z)J_{N-1}(z,(y_1-y_N,\ldots,y_{N-1}-y_N)),
\end{align}
cf.~I (6.27)--(6.28). Note that the function $J_N^r(x,y)$ depends only on the differences $x_j-x_{j+1}$ and $y_j-y_{j+1}$, $j=1,\ldots,N-1$.

By performing simultaneous contour shifts in the former representation \eqref{JN}, we showed in I Theorem 6.1 that for fixed $y\in\R^N$ the function $J_N(b;x,y)$ is holomorphic in
\be\label{DN}
D_N\equiv\Big\{(b,x)\in S_a\times\C^N\mid \max_{1\leq j<k\leq N}|\im(x_j-x_k)|<2a-\re b\Big\}.
\ee
Moreover, after restricting attention to a subdomain of $D_N$ for the dependence on $(b,x)$, we could allow $y\in\C^N$ such that $|\im(y_j-y_k)|<\re b$, $1\leq j<k\leq N$. Specifically, introducing the restricted domain
\be\label{DNr}
D_N^r\equiv \{ (b,x)\in S_a\times \C^N \mid |\im x^{(N)}_j|<a-\re b/2,\ \ j=1,\ldots,N \}\subset D_N,
\ee
we used the latter representation \eqref{JNrep2} to prove that $J_N(b;x,y)$ is holomorphic in $(b,x,y)$ on the domain
\be\label{cDN}
\cD_N\equiv \Big\{ (b,x,y)\in D_N^r\times \C^N \mid  \max_{1\leq j<k\leq N}|\im (y_j-y_k)|<\re b \Big\},
\ee
cf.~I Theorem 6.4.

In Section \ref{Sec2} we study the asymptotic behavior of the function
\be\label{rEN}
\rE_N(b;x,y)\equiv \left(\frac{\phi(b)G(ib-ia)}{\sqrt{a_+a_-}}\right)^{N(N-1)/2}\frac{J_N(b;x,y)}{C_N(b;x)C_N(2a-b;y)},
\ee
where
\be\label{phi}
\phi(b)\equiv \exp(i\alpha b(b-2a)/4)=\phi(2a-b).
\ee
Since the $c$-function is not even, $\rE_N$ lacks some of the invariance properties of $J_N$. However, the multipliers in~\eqref{rEN} are meromorphic functions whose features are known in great detail. Hence the analyticity properties of~$E_N$  follow from those of $J_N$. Moreover, $E_N$ is particularly well suited for Hilbert space purposes. 

As the principal result of Section \ref{Sec2} and of this paper, we prove in Theorem \ref{Thm:rENas} that $\rE_N$ has the `unitary asymptotics' 
\be\label{ENsc}
\rE_N(b;x,y)\sim \rE_N^{{\rm as}}(b;x,y)\equiv \sum_{\sigma\in S_N}\prod_{\substack{j<k\\\sigma^{-1}(j)>\sigma^{-1}(k)}}(-u(b;x_k-x_j))\cdot\exp\Big(i\alpha \sum_{j=1}^N x_{\sigma(j)}y_j\Big),
\ee
for $y_j-y_{j+1}\to\infty$, $j=1,\ldots, N-1$. Here the scattering function $u$ is given by
\be\label{u}
u(b;z)\equiv -\frac{c(b;z)}{c(b;-z)}=-\prod_{\de=+,-}\frac{G(z+\de i(a-b))}{G(z+\de ia)}.
\ee
It clearly satisfies
\be\label{uref}
u(b;z)u(b;-z)=1,
\ee
and we also have
\be\label{uphase}
|u(b;z)|=1,\ \ \ \ b,z\in\R,
\ee
 due to the reflection equation I (A.6) and the conjugation relation I (A.9). Moreover, we obtain a uniform bound on $\rE_N(x,y)$ for suitably restricted $(x,y)\in\C^N\times\R^N$, which plays a crucial role in the inductive step $N-1\to N$.
 
The asymptotic behavior \eqref{ENsc} confirms a long-standing conjecture. In physical parlance, it says that the particles in the relativistic Calogero-Moser systems of hyperbolic type exhibit soliton scattering (conservation of momenta and factorization of the $S$-matrix), cf.~I Section 7. For a survey of the $A_{N-1}$ type Calogero-Moser systems and their relation to soliton PDEs we refer to~\cite{R94}. In particular, the sine-Gordon soliton scattering corresponds to choosing $b$ equal to $a_{+}/2$ or $a_{-}/2$ in~\eqref{u}. See also the recent paper~\cite{HwR16} for more information on this `sine-Gordon' perspective.
 
Within the context of harmonic analysis, factorized asymptotics  was first established by Harish-Chandra for the spherical functions associated with certain symmetric spaces.  Viewed from the $A_{N-1}$ perspective of this paper, the Harish-Chandra work pertains to the nonrelativistic Calogero-Moser systems for a few special coupling constants (see~\cite{H00} for a comprehensive account of the general Harish-Chandra results, as well as related ones, and \cite{OP83} for their relevance to Calogero-Moser systems). Factorized asymptotics for the hyperbolic case with arbitrary positive coupling was first proved by Opdam~\cite{O95}, working within the arbitrary root system context developed by him and Heckman, a summary of which can be found in~\cite{HS94}. A crucial aspect of the asymptotic analysis in these references is the existence and exploitation of series expansions. By contrast, no such expansions are known for the eigenfunctions at issue in this paper. As in our previous work, a key point is rather to use their recursive structure.

\section{Asymptotic behavior}\label{Sec2}

Using II Theorems 3.7--3.8 as the starting point for an induction argument, we proceed to determine the asymptotics of the function $\rE_N(b;x,y)$ \eqref{rEN} for $d_N(y)\to\infty$, where
\be
d_N(y)\equiv\min_{1\leq j<k\leq N}(y_j-y_k),\ \ \ y\in\R^N.
\ee
More specifically, Theorems \ref{Thm:rENas}--\ref{Thm:ubound} below are a consequence of the former for $N = 3$, and our induction assumption is that they hold true if we replace $N$ by $N-1$. In the present general-$N$ setting, however, we restrict attention to $\re b$ varying over a sub-interval of $(0,2a)$, namely $(0,a_l]$. Thus we introduce the strip
\be\label{Sl}
S_l\equiv \{ b\in\C\mid \re b\in (0,a_l]\}.
\ee

We start with some auxiliary results about $J_N(b;x,y)$.

\begin{proposition}\label{Prop:JNsym}
For fixed $y\in\R^N$, the function $J_N(b;x,y)$ is holomorphic in
\be\label{DNl}
D_N^l\equiv\Big\{(b,x)\in S_l\times\C^N\mid \max_{1\leq j<k\leq N}|\im(x_j-x_k)|<a_s\Big\}.
\ee
Furthermore, for all $(b,x,y)\in\cD_N$~\eqref{cDN} and $\eta\in\C$, we have symmetry properties
\be\label{JNri}
J_N(x,y)=J_N(-x,-y),
\ee 
\be\label{JNhom}
\begin{split}
J_N(x,y)=& \exp(-i\alpha\eta(y_1+\cdots+y_N))J_N((x_1+\eta,\ldots,x_N+\eta),y)\\
& =\exp(-i\alpha\eta(x_1+\cdots+x_N))J_N(x,(y_1+\eta,\ldots,y_N+\eta)),
\end{split}
\ee
\be\label{JNp}
J_N(\sigma x,y)=J_N(x,y),\ \ \ \sigma\in S_N.
\ee
\end{proposition}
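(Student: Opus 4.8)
The plan is to deduce the holomorphy assertion directly from I~Theorem~6.1, to read off \eqref{JNhom} from the representation \eqref{JNrep2}, and to prove \eqref{JNp} and \eqref{JNri} by manipulating the integral \eqref{JN} for real arguments and then continuing analytically to $\cD_N$, with \eqref{JNri} requiring an induction on $N$. For the holomorphy claim, I would simply observe that $D_N^l\subset D_N$: if $(b,x)\in D_N^l$ then $\re b\le a_l<2a$, so $b\in S_a$, and $\max_{1\le j<k\le N}|\im(x_j-x_k)|<a_s=2a-a_l\le 2a-\re b$, whence $(b,x)\in D_N$ by \eqref{DN}. Holomorphy of $J_N(b;x,y)$ on $D_N^l$ for fixed $y\in\R^N$ is therefore immediate from I~Theorem~6.1; recording the smaller domain $D_N^l$ is what matters later, since its bound $a_s$ on the imaginary parts of the $x_j-x_k$ is independent of $b$ throughout $S_l$, as the induction of Section~\ref{Sec2} needs.

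For \eqref{JNhom} I would invoke the representation \eqref{JNrep2}, $J_N(x,y)=\exp(Ni\alpha X_NY_N)J_N^r(x,y)$, together with the fact recorded below \eqref{JNr} that $J_N^r$ depends on $x$ and $y$ only through the differences $x_j-x_{j+1}$ and $y_j-y_{j+1}$. Replacing $x$ by $(x_1+\eta,\dots,x_N+\eta)$ leaves those differences, and hence $J_N^r$, unchanged, while $X_N\mapsto X_N+\eta$ and $Y_N$ stays put; since $NY_N=y_1+\cdots+y_N$, the first equality of \eqref{JNhom} follows for every $\eta\in\C$. The second equality is obtained in the same way, now using that the $y$-dependence of \eqref{JNr} enters only through $(y_1-y_N,\dots,y_{N-1}-y_N)$, so that a common shift of the $y_j$ by $\eta$ leaves $J_N^r$ fixed while $Y_N\mapsto Y_N+\eta$ and $NX_N=x_1+\cdots+x_N$.

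For \eqref{JNp} and \eqref{JNri} the idea is to establish the identities first for $x,y\in\R^N$ from \eqref{JN}, and then extend them to all of $\cD_N$ by analytic continuation: $J_N$ is holomorphic on the connected domain $\cD_N$ (I~Theorem~6.4), the linear maps $(b,x,y)\mapsto(b,\sigma x,y)$ and $(b,x,y)\mapsto(b,-x,-y)$ carry $\cD_N$ into itself (the defining inequalities of $D_N^r$ and $\cD_N$ involve only the quantities $|\im x^{(N)}_j|$ and $|\im(y_j-y_k)|$, invariant under both maps), and $\cD_N$ contains a nonempty open subset of $\R\times\R^N\times\R^N$. On this real slice, \eqref{JNp} is immediate, since in \eqref{JN}--\eqref{IN} the $x$-dependence resides only in the prefactor $\exp(i\alpha y_N(x_1+\cdots+x_N))$ and in $\cS^\sharp_N(b;x,z)$, both symmetric under permutations of $x_1,\dots,x_N$ — the latter by \eqref{cSN}, being a product over all $j=1,\dots,N$. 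For \eqref{JNri} I would induct on $N$, the base case being trivial (e.g.\ $N=1$) or covered by I and II. In the inductive step, substitute $z\mapsto -z$ in the integral \eqref{JN}, a measure-preserving change of variables on $\R^{N-1}$. Using the reflection equation $G(-z)=1/G(z)$ one checks that $\cS^\sharp_N(b;-x,-z)=\cS^\sharp_N(b;x,z)$, and from \eqref{WN}--\eqref{CN} that $W_{N-1}(-z)=W_{N-1}(z)$; the prefactor of \eqref{JN} is invariant under $(x,y)\mapsto(-x,-y)$; and the induction hypothesis gives $J_{N-1}(-z,-w)=J_{N-1}(z,w)$ with $w=(y_1-y_N,\dots,y_{N-1}-y_N)$. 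Hence the integrand for $J_N(-x,-y)$ is carried to that for $J_N(x,y)$, proving \eqref{JNri} on the real slice.

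The computations involved are routine; the only points that require a little care are the verification of the $z\mapsto -z$ and $x$-permutation symmetries of $\cS^\sharp_N$ and $W_{N-1}$, which is an elementary consequence of the product structure of \eqref{cSN}, \eqref{WN} and of $G(-z)=1/G(z)$, and the justification of the passage from real arguments to $\cD_N$, for which the holomorphy of $J_N$ on $\cD_N$ together with the connectedness of that domain suffice. I do not anticipate a genuine obstacle here: these are the arbitrary-$N$ counterparts of statements already checked for small $N$ in I and II.
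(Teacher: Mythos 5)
Your proposal is correct and follows essentially the same route as the paper: the inclusion $D_N^l\subset D_N$ together with I Theorem 6.1 for holomorphy, permutation symmetry read off from \eqref{JN}, the reflection property \eqref{JNri} by induction via the substitution $z\to-z$ and the integrand identity $I_N(-x,-y,-z)=I_N(x,y,z)$, the shift identities \eqref{JNhom} from the representation \eqref{JNrep2}--\eqref{JNr}, and analytic continuation to all of $\cD_N$. The only (harmless) difference is that the paper first takes $\eta\in\R$ and real arguments before continuing, whereas you argue for complex $\eta$ directly from \eqref{JNrep2}--\eqref{JNr}.
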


\begin{proof}
The first assertion is an easy consequence of the readily verified inclusion
\be
D_N^l\subset D_N,
\ee
cf.~\eqref{DN}. 

Letting $x,y\in\R^N$ to begin with, the permutation invariance \eqref{JNp} is immediate from the defining representation \eqref{JN}. To establish the invariance properties \eqref{JNri}--\eqref{JNhom}, we assume inductively that they hold true for $N\geq 3$. (In the case $N=3$ this is the content of II Proposition 3.1.)  From \eqref{WN}--\eqref{cSN}, \eqref{JNri} with $N\to N-1$ and the reflection equation I (A.6) for $G(z)$, we infer
\be\label{INri}
I_N(-x,-y,-z)=I_N(x,y,z).
\ee
Changing variable $z\to -z$ in the representation \eqref{JN}, the invariance property \eqref{JNri} is a direct consequence of \eqref{INri}. Requiring in addition $\eta\in\R$, we deduce \eqref{JNhom} from the alternative representation given by \eqref{JNrep2}--\eqref{JNr}. Since \eqref{JNri}--\eqref{JNp} are preserved under analytic continuation, the proof is complete.
\end{proof}

This proposition has the following corollary.
 
\begin{corollary}\label{Cor:EN}
Letting $y\in\R^N$, the function $\rE_N(b;x,y)$ is meromorphic in $D_N^l$ and holomorphic in
\be
D_{N,\beta}^l\equiv\Big\{(b,x)\in D_N^l \mid \im(x_j-x_{j+1})<\beta,\ \ j=1,\ldots,N-1,\ \ \im(x_1-x_N)>-a_s\Big\},
\ee
where
\be\label{beta}
\beta\equiv \min (\re b,a_s).
\ee
Moreover, for all $(b,x,y)\in\cD_N$~\eqref{cDN} and $\eta\in\C$, it satisfies
\be\label{rENri}
\rE_N(-x,-y)=\rE_N(x,y)\prod_{1\le j<k\le N}u(x_j-x_k)u(y_j-y_k), 
\ee
\be\label{rENhom}
\begin{split}
\rE_N(x,y)=& \exp(-i\alpha\eta(y_1+\cdots+y_N))\rE_N((x_1+\eta,\ldots,x_N+\eta),y)\\
& =\exp(-i\alpha\eta(x_1+\cdots+x_N))\rE_N(x,(y_1+\eta,\ldots,y_N+\eta)),
\end{split}
 \ee
\be\label{rENp}
\rE_N(\sigma x,y)=\rE_N(x,y)\prod_{\substack{j<k\\ \sigma^{-1}(j)>\sigma^{-1}(k)}}(-u(x_j-x_k)),\ \ \ \sigma\in S_N,
\ee
where $(\sigma x)_j\equiv x_{\sigma(j)}$. 
\end{corollary}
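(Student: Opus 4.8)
\medskip
\noindent\emph{Proof plan.} The plan is to deduce every assertion from Proposition~\ref{Prop:JNsym} by combining it with the explicit analytic structure of the three multipliers entering the definition~\eqref{rEN} of $\rE_N$: the $b$-dependent prefactor $(\phi(b)G(ib-ia)/\sqrt{a_+a_-})^{N(N-1)/2}$, the factor $1/C_N(b;x)$, and the factor $1/C_N(2a-b;y)$. For the analyticity claims, Proposition~\ref{Prop:JNsym} gives holomorphy of $J_N(b;x,y)$ in $D_N^l$ for fixed $y\in\R^N$, while the three multipliers are meromorphic in $(b,x,y)$ with divisors read off from those of $\phi$ and of the hyperbolic gamma function recalled in I~Appendix~A. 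Using those divisors one checks that the prefactor is holomorphic and zero-free for $b\in S_l$ (indeed $\phi$ is entire and zero-free, while $G(i(b-a))$ misses the zeros and poles of $G$ throughout $\re b\in(0,a_l]$), and that for such $b$ and for $y\in\R^N$ with pairwise distinct components $C_N(2a-b;y)$ is holomorphic and zero-free, since for real $w$ the function $c(2a-b;w)$ has a pole only at $w=0$ and no zeros on that $b$-range (the degenerate cases $y_j=y_k$ being trivial, as then $\rE_N\equiv0$). Hence $\rE_N(b;x,y)$ is a product of a function holomorphic in $D_N^l$ with one meromorphic there, so it is meromorphic in $D_N^l$.

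For the symmetry relations~\eqref{rENri}--\eqref{rENp} I would first argue for real $x,y$ (and real $\eta$), where all multipliers are holomorphic and the relations become identities between meromorphic functions, and then pass to $\cD_N$ and to $\eta\in\C$ by analytic continuation exactly as in the proof of Proposition~\ref{Prop:JNsym}. The homogeneity relation~\eqref{rENhom} is immediate, since $C_N(b;\cdot)$ and $C_N(2a-b;\cdot)$ depend only on coordinate differences and are hence translation invariant, so~\eqref{rENhom} reduces at once to~\eqref{JNhom}. For~\eqref{rENri} one combines $J_N(-x,-y)=J_N(x,y)$ from~\eqref{JNri} with the two elementary identities $c(b;z)/c(b;-z)=-u(b;z)$ (immediate from~\eqref{u}) and $u(2a-b;z)=u(b;z)$ (a one-line computation from~\eqref{u} upon relabelling $\de\to-\de$ in the product), which yield $C_N(b;x)/C_N(b;-x)=\prod_{j<k}(-u(x_j-x_k))$ and $C_N(2a-b;y)/C_N(2a-b;-y)=\prod_{j<k}(-u(y_j-y_k))$; multiplying these, the two signs $(-1)^{N(N-1)/2}$ cancel and~\eqref{rENri} results. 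For~\eqref{rENp} one uses $J_N(\sigma x,y)=J_N(x,y)$ from~\eqref{JNp} and evaluates $C_N(b;x)/C_N(b;\sigma x)$: an unordered pair $\{j,k\}$ with $j<k$ enters $C_N(b;\sigma x)$ as $c(b;x_j-x_k)$ when $\sigma^{-1}(j)<\sigma^{-1}(k)$ and as $c(b;x_k-x_j)$ otherwise, so in the quotient only the pairs with $\sigma^{-1}(j)>\sigma^{-1}(k)$ survive, each contributing the factor $-u(x_j-x_k)$ by means of $c(b;z)/c(b;-z)=-u(b;z)$ and~\eqref{uref}; this is exactly~\eqref{rENp}.

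There remains the sharper holomorphy in $D_{N,\beta}^l$, for which I would locate the poles of $\rE_N$ precisely. In $D_N^l$ the prefactor and (for fixed $y\in\R^N$) the factor $1/C_N(2a-b;y)$ are holomorphic, and $J_N$ is holomorphic, so a pole can come only from a zero of $C_N(b;x)$, that is from $c(b;x_j-x_k)=0$ for some $j<k$. By~\eqref{c} and the location of the zeros of $G$, such a vanishing requires $\re(x_j-x_k)=-\im b$ and $\im(x_j-x_k)=\re b+ma_++na_-$ for some integers $m,n\geq0$, hence in particular $\im(x_j-x_k)\geq\re b$, with equality forcing $x_j-x_k=ib$. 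The defining inequalities $\im(x_j-x_{j+1})<\beta\leq\re b$ of $D_{N,\beta}^l$ rule this out for the consecutive pairs, and when $\re b\geq a_s$ the bound $|\im(x_j-x_k)|<a_s\leq\re b$ of $D_N^l$ rules it out for all pairs at once (so there $D_{N,\beta}^l=D_N^l$ and $\rE_N$ is holomorphic on it). The remaining case --- a non-consecutive pair together with $\re b<a_s$, for which the inequalities defining $D_{N,\beta}^l$ do permit $\im(x_j-x_k)=\re b$ --- is the crux: I would show that for $N\geq3$ the function $J_N(b;x,y)$ vanishes on the hyperplanes $x_j-x_k=ib+i(ma_++na_-)$, $m,n\geq0$, that meet $D_{N,\beta}^l$, so that the apparent pole is removable. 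Since this vanishing cannot follow from holomorphy of $J_N$ alone --- for $N=2$ it fails, but there the relevant hyperplane lies outside $D_{2,\beta}^l$ --- I would extract it from the recursive construction~\eqref{JNr} of $J_N$, equivalently from the simultaneous contour shifts underlying I~Theorem~6.1; I expect this step to be the main obstacle.
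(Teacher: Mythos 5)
Your handling of the meromorphy statement and of the three symmetry relations is correct and is in essence the paper's own argument: the paper likewise just divides \eqref{rEN} by the multipliers and infers \eqref{rENri}--\eqref{rENp} from Proposition \ref{Prop:JNsym}, via $c(b;z)/c(b;-z)=-u(b;z)$, the identity $u(2a-b;z)=u(b;z)$, translation invariance of $C_N$, and the count of inversions of $\sigma$; your sign bookkeeping agrees with the stated formulas, and your remarks on the $b$-prefactor and on $C_N(2a-b;y)$ for real $y$ are sound.

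The divergence is in the holomorphy claim on $D_{N,\beta}^l$, and there your proposal stops short of a proof. The paper's entire argument for this point is the observation that $C_N(b;x)$ vanishes only on the hyperplanes $x_j-x_k=ib+ima_++ina_-$ and $x_j-x_k=-2ia-ima_+-ina_-$, $m,n\in\N$, together with the assertion that these do not meet $D_{N,\beta}^l$; no vanishing of $J_N$ is invoked anywhere, so the additional step you propose (showing $J_N=0$ on $x_j-x_k=ib$ for $N\ge 3$) is not part of the intended proof, and since you do not carry it out, your proposal leaves the stated holomorphy unestablished. On the other hand, your diagnosis of where the pole-location argument needs care is accurate: with only the consecutive-difference inequalities $\im(x_j-x_{j+1})<\beta$ imposed, a point with $x_1-x_3=ib$, $\im(x_1-x_2)=\im(x_2-x_3)=\re b/2$ and $\re b<a_s$ satisfies all the listed inequalities, so the exclusion of the zero hyperplanes is immediate only for adjacent pairs, for $\re b\ge a_s$, or under the stronger reading $\im(x_j-x_k)<\beta$ for all $1\le j<k\le N$. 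Note that only this all-pairs version is used later in the paper (the $x$-independent poles of $\rI_N$ are avoided for $|\im(z_j-z_k)|<\beta$, all $j<k$, in the run-up to Lemma \ref{Lem:auxN}), and under that reading the corollary follows from the zero locations of $C_N(b;x)$ alone, exactly as in the paper, with no need for your conjectured zeros of $J_N$. So either work with the all-pairs form of the domain, in which case your argument closes and coincides with the paper's, or, if the literal consecutive-pair domain is insisted upon, the missing cancellation statement at $x_j-x_k=ib$ remains an open gap in your write-up — and it is not supplied by the two-line proof in the paper either.
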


\begin{proof}
The zeros of $C_N(b;x)$ are located at
\be\label{CNzero}
x_j-x_k=-2ia-ima_+-ina_-,\ ib+ima_++ina_-, \ 1\leq j<k\leq N-1,\ \ m,n\in\N,
\ee
so the poles of $1/C_N(b;x)$  do not belong to~$D_{N,\beta}^l$. Hence the first assertion is clear from the relation \eqref{rEN} between $J_N$ and $\rE_N$.

Keeping in mind \eqref{CN} and \eqref{u}, the symmetry features are readily inferred from \eqref{rEN} and Proposition \ref{Prop:JNsym}. 
\end{proof}

Recalling from I (2.11) the kernel function
\be\label{cKN}
\cK^\sharp_N(b;x,z)\equiv [C_N(b;x)C_{N-1}(b;-z)]^{-1}\cS^\sharp_N(b;x,z),
\ee
it is easily seen that \eqref{IN}--\eqref{JN} and \eqref{rEN} yield the representation
\begin{multline}\label{rENrep}
\rE_N(b;x,y)=\frac{1}{(N-1)!}\left(\frac{\phi(b)G(ib-ia)}{\sqrt{a_+a_-}}\right)^{N-1}\\
\times\frac{\exp(i\alpha y_N(x_1+\cdots+x_N))}{\prod_{n=1}^{N-1}c(2a-b;y_n-y_N)}\int_{\R^{N-1}}dz\, \rI_N(b;x,y,z),\ \ \ b\in S_a,\ \ x,y\in\R^N,
\end{multline}
with integrand
\be\label{rIN}
\rI_N(b;x,y,z)\equiv \cK_N^\sharp(b;x,z)\rE_{N-1}(b;z,(y_1-y_N,\ldots,y_{N-1}-y_N)).
\ee

Following our treatment of the $N=2$ and $N=3$ cases in II, we determine the dominant asymptotics of $\rE_N$ by shifting the $z_k$-contours $\R$ in \eqref{rENrep} up past the poles of $\rI_N$ located at
\be\label{psN}
z_k=x_j+ia-ib/2,\ \ \ k=1,\ldots,N-1,\ \ j=1,\ldots,N.
\ee
Using \eqref{rEN} and \eqref{c}--\eqref{CN}, we find that the $G$-zero $G(ia)=0$ (cf.~I (A.12)) ensures that $\rE_N$ vanishes whenever $x_j=x_k$, $1\leq j<k\leq N$. Hence no generality is lost by assuming
\be\label{diffxN}
x_j\neq x_k,\ \ \ 1\leq j<k\leq N,
\ee
so that the poles \eqref{psN} are simple.

In order to keep track of the residues that appear, it will be important to shift the $N-1$ contours one at a time. Doing so, we must ensure that we retain sufficient decay of $\rI_N$ on the contour tails and that we do not meet any of its $x$-independent poles. 

To control the tail decay, we first use the $c$-definition~\eqref{c} and the $G$-asymptotics specified in I~(A.14)--(A.16) to infer
\be\label{casym}
|\phi(b)^{\mp 1}\exp(\pm\alpha bz/2)c(b;z)-1| \le C_1(\rho,b,\im z)\exp(-\alpha\rho|\re z|),\ \ \ \re z\to\pm\infty,
\ee
where the decay rate $\rho$ can be chosen in $[a_s/2,a_s)$, and where $C_1$ is continuous on $[a_s/2,a_s)\times S_a\times \R$. 

Next, by the induction assumption, we may invoke Theorem~\ref{Thm:ubound} with $N\to N-1$. Requiring at first $\im(z_j-z_k)\in(-a_s,0]$, $1\leq j<k\leq N-1$, we can use the resulting bound on $\rE_{N-1}$, together with \eqref{cSN} and~\eqref{casym}, to deduce that the integrand $\rI_N$ decays exponentially for $|\re z_k|\to\infty$. Indeed, we have $N-1$ factors of the form $c(z_k\cdots)$ in the numerator and $N-2$ factors of the form $c(z_k\cdots)$ or $c(-z_k\cdots)$ in the denominator, cf.~\eqref{cSN} and~\eqref{CN} with $N\to N-1$.

 Now from~\eqref{casym} and the $u$-definition~\eqref{u} we readily obtain
\be\label{uasym}
|u(b;z)\phi(b)^{\mp 2}+1|
\le C_2(\rho,b,\im z)\exp(-\alpha\rho|\re z|),\ \ \ \re z\to\pm\infty,
\ee
 with $C_2$  continuous on $[a_s/2,a_s)\times S_a\times \R$.
Furthermore, using \eqref{rIN}, \eqref{cKN} and \eqref{rENp}, we find
\be\label{rINp}
\rI_N(x,y,\tau z)=\rI_N(x,y,z),\ \ \ \tau\in S_{N-1}.
\ee
Combining this with~\eqref{uasym}, we conclude that $\rI_N$ has the same decay for $\im(z_j-z_k)\in[0,a_s)$, $1\leq j<k\leq N-1$. 

The upshot of this analysis is that the shift of a single contour causes no problems at the tail ends, as long as the contours are separated by a distance less than $a_s$. Moreover, since we require $b\in S_l$, the $x$-independent poles of $\rI_N$ are not met for $|\im(z_j-z_k)|<\beta$, $1\leq j<k\leq N-1$, cf.~Corollary~\ref{Cor:EN}.  
 
Finally, for a given vector $t\equiv(t_1,\ldots,t_M)\in\C^M$, $M>1$, we use the notation 
\be
t(\nu_1,\ldots,\nu_L),\ \ \ 1\leq \nu_j\neq\nu_k\leq M,\ \ 1\leq j<k\leq L,
\ee
to denote the vector in $\C^{M-L}$ obtained by omitting the entries $t_{\nu_1},\ldots,t_{\nu_L}$ in $t$. Introducing the additional notation
\be
z_{>L}\equiv z(z_1,\ldots,z_L)=(z_{L+1},\ldots,z_{N-1}),\ \ \ L=1,\ldots,N-2,
\ee
and the functions
\be\label{MN}
M_N(b;y)\equiv \frac{\phi(b)^{N-1}}{\prod_{n=1}^{N-1}c(2a-b;y_n-y_N)}\rho_N(b;y),
\ee
\be\label{rhoN}
\rho_N(b;y)\equiv \exp\Big(-\alpha(a-b/2)\sum_{n=1}^{N-1}(y_n-y_N)\Big),
\ee
we are now ready to implement the contour shift procedure.

\begin{lemma}\label{Lem:auxN}
Letting $(r,b)\in(0,a_s)\times S_l$ and $x,y\in\R^N$ with the $x$-restriction \eqref{diffxN} in effect, we have
\begin{multline}\label{rENrep2}
\frac{\rE_N(x,y)}{M_N(y)}\exp(-i\alpha y_N(x_1+\cdots+x_N))\\
=\frac{1}{\rho_N(y)}\Bigg[\frac{1}{(N-1)!}\left(\frac{G(ib-ia)}{\sqrt{a_+a_-}}\right)^{N-1}\int_{(C_b+ir)^{N-1}}dz\, \rI_N(x,y,z)\\
+\sum_{L=1}^{N-2}\frac{1}{(N-1-L)!}\left(\frac{G(ib-ia)}{\sqrt{a_+a_-}}\right)^{N-1-L}\sum_{1\leq\nu_1<\cdots<\nu_L\leq N}\cU_{\nu_1,\ldots,\nu_L}(x)\\
\times \int_{(C_b+ir)^{N-1-L}}dz_{>L}\, \hat{\rI}_{N;\nu_1,\ldots,\nu_L}(x,y,z_{>L})\Bigg]\\
+\sum_{\nu=1}^N\frac{C_N(x(\nu),x_\nu)}{C_N(x)}\rE_{N-1}(x(\nu),(y_1-y_N,\ldots,y_{N-1}-y_N)).
\end{multline}
Here, $\rI_N(x,y,z)$ is given by \eqref{rIN}, we have set
\begin{multline}
\hat{\rI}_{N;\nu_1,\ldots,\nu_L}(b;x,y,z_{>L})\equiv \cK^\sharp_{N-L}(b;x(\nu_1,\ldots,\nu_L),z_{>L})\\
\times \rE_{N-1}(b;(x_{\nu_1}+ia-ib/2,\ldots,x_{\nu_L}+ia-ib/2,z_{>L}),(y_1-y_N,\ldots,y_{N-1}-y_N)),
\end{multline}
\be
\cU_{\nu_1,\ldots,\nu_L}(b;x)\equiv \prod_{\ell=1}^L\prod_{\substack{j<\nu_\ell\\ j\neq\nu_1,\ldots,\nu_{\ell-1}}}(-u(b;x_{\nu_\ell}-x_j)),
\ee
and $C_b$ denotes  the contour
\be
C_b\equiv \R+i(a-\re b/2).
\ee
\end{lemma}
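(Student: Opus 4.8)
The plan is to prove \eqref{rENrep2} by iterating the single-contour shift described in the paragraphs preceding the statement, carefully bookkeeping the residues. Starting from the representation \eqref{rENrep}--\eqref{rIN} of $\rE_N$, I would first divide through by $M_N(y)\exp(i\alpha y_N(x_1+\cdots+x_N))$ using \eqref{MN}--\eqref{rhoN}, so that the left-hand side matches and the prefactor $\phi(b)^{N-1}/\prod_n c(2a-b;y_n-y_N)$ is absorbed, leaving the cleaner factor $(G(ib-ia)/\sqrt{a_+a_-})^{N-1}$ in front of the integral over $\R^{N-1}$. Then I would shift the $z_1$-contour from $\R$ up to $C_b+ir=\R+i(a-\re b/2)+ir$. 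By the tail-decay analysis already established — exponential decay from the $c$-factors \eqref{casym}, the induction bound from Theorem~\ref{Thm:ubound} with $N\to N-1$, and the permutation invariance \eqref{rINp} extending decay to all relative imaginary parts in $(-a_s,a_s)$ — the shift is legitimate and, since $b\in S_l$, meets no $x$-independent poles; the only poles crossed are the simple ones at $z_1=x_\nu+ia-ib/2$, $\nu=1,\ldots,N$, from \eqref{psN}. Computing the residue of $\rI_N$ at each such pole, using $c(b;z)=G(z+ia-ib)/G(z+ia)$ and the residue $G(z-ia)\sim (\text{const})/(z)$-type behavior of the hyperbolic gamma function near its poles (I~(A.12) ff.), produces a factor $\sqrt{a_+a_-}/G(ib-ia)$ (cancelling one power of the prefactor) times $\cK^\sharp_{N-1}$ evaluated with $x_\nu$ deleted, together with the $u$-product $\prod_{j<\nu}(-u(x_\nu-x_j))$ coming from the $c$-factors $c(z_1-x_j\cdots)$ with $j\neq\nu$ that no longer sit at a pole; this is exactly one term of $\cU_{\nu}$ times $\hat{\rI}_{N;\nu}$.

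Next I would iterate: on each residual $(N-2)$-fold integral, shift $z_2$, then $z_3$, and so on. The crucial combinatorial point is the symmetry \eqref{rINp} of $\rI_N$ under $S_{N-1}$ permutations of the $z$-variables, which guarantees that shifting $z_k$ produces the same structure regardless of $k$ and lets me collect equal contributions. Shifting $L$ of the contours and collecting those residue terms in which the crossed poles are at $x_{\nu_1},\ldots,x_{\nu_L}$ (with $\nu_1<\cdots<\nu_L$ after using the $z$-symmetry to order them) yields a multiplicity $\binom{N-1}{L}L!\big/(N-1)!=1/(N-1-L)!$ relative to the original $1/(N-1)!$, the nested $u$-product $\cU_{\nu_1,\ldots,\nu_L}$ — where at stage $\ell$ the pole at $x_{\nu_\ell}$ contributes $\prod_{j<\nu_\ell,\,j\neq\nu_1,\ldots,\nu_{\ell-1}}(-u(x_{\nu_\ell}-x_j))$ because the variables already set equal to $x_{\nu_1}+ia-ib/2,\ldots$ are themselves poles that have been accounted for and must be excluded — and the function $\hat{\rI}_{N;\nu_1,\ldots,\nu_L}$ with $\rE_{N-1}$ evaluated at $(x_{\nu_1}+ia-ib/2,\ldots,x_{\nu_L}+ia-ib/2,z_{>L})$. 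When $L=N-1$ all contours have been shifted; at that last stage the remaining "integral" is zero-dimensional, and the residue term is precisely $\rE_{N-1}$ with all $N-1$ remaining $z$'s set to $x_{\nu_\ell}+ia-ib/2$ — here one must simplify $\cK^\sharp_1$-type leftover factors and recognize, via \eqref{cKN}, \eqref{cSN} and the $c$-function reflection, that the accumulated multiplier is exactly $C_N(x(\nu),x_\nu)/C_N(x)$ for the single omitted index $\nu$, giving the final sum $\sum_{\nu=1}^N \frac{C_N(x(\nu),x_\nu)}{C_N(x)}\rE_{N-1}(x(\nu),\ldots)$ displayed in \eqref{rENrep2}.

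To organize this cleanly I would set up an induction on $L$ (the number of contours already shifted), with the statement at step $L$ being the partially-shifted identity in which the last $N-1-L$ integrals still run over $\R$; the inductive step is one more single-contour shift plus the residue computation above. I would keep $r\in(0,a_s)$ throughout so that the shifted contours $C_b+ir$ stay within the strip where decay and pole-avoidance hold, and note the $x$-restriction \eqref{diffxN} is what makes all the crossed poles simple so the residues are unambiguous. The main obstacle I anticipate is purely bookkeeping rather than analytic: correctly tracking which $u$-factors survive at each residue and with which index-exclusions, so that the nested products telescope into $\cU_{\nu_1,\ldots,\nu_L}$ and, in the top stage, into the $C_N$-ratio — in other words, matching the combinatorics of "which variables have already been frozen" against the structure of $\cK^\sharp$ and $C_N$. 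A secondary technical point is verifying that the residue of the hyperbolic gamma function contributes exactly the factor $\sqrt{a_+a_-}/G(ib-ia)$ needed to reduce the power of $(G(ib-ia)/\sqrt{a_+a_-})$ by one at each step; this follows from I~(A.12) and the value of $G$ at its first pole, but should be stated carefully since it is what makes the prefactors in \eqref{rENrep2} come out right.
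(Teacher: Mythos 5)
Your overall strategy (shift contours past the poles \eqref{psN}, collect residues, use the $S_{N-1}$-symmetry \eqref{rINp} to organize multiplicities, and convert the accumulated $c$-factors into $\cU_{\nu_1,\ldots,\nu_L}$ and the $C_N$-ratio) is the right one, but the specific shift scheme you propose has a genuine gap. You move $z_1$ all the way from $\R$ to $C_b+ir$, with $\im$-height $a-\re b/2+r$, while $z_2,\ldots,z_{N-1}$ stay on $\R$. This puts the relative imaginary parts $\im(z_1-z_k)$ at $a-\re b/2+r$, which in general exceeds both $a_s$ and $\beta$ (e.g.\ already $a-\re b/2>a_s$ whenever $\re b<a_l-a_s$). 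Outside the band $|\im(z_j-z_k)|<a_s$ the tail-decay argument you invoke breaks down, since the induction bound of Theorem \ref{Thm:ubound} with $N\to N-1$ (even after using \eqref{rINp} and \eqref{uasym}) only covers relative imaginary parts in $(-a_s,a_s)$; and outside $|\im(z_j-z_k)|<\beta$ you can cross the $x$-independent poles of the integrand coming from $1/C_{N-1}(-z)$ in $\cK_N^\sharp$ and from $\rE_{N-1}(z,\hat y)$ itself (Corollary \ref{Cor:EN} with $N\to N-1$), e.g.\ the zeros of $c(z_1-z_j)$ at $\im(z_1-z_j)=\re b$. So the "one contour fully at a time" iteration is not legitimate as stated; you even quote the separation condition ("less than $a_s$") but the proposed scheme violates it.

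The paper's proof is built precisely to avoid this: all $N-1$ contours are first shifted simultaneously from $\R$ to $C_b-i\epsilon$ with $0<\epsilon<\beta/2$ (no poles crossed, separations stay zero), then the contours are raised one at a time by only $2\epsilon$ to $C_b+i\epsilon$, so that at every stage $|\im(z_j-z_k)|\le 2\epsilon<\beta\le a_s$ and only the simple poles \eqref{psN} are met (the $\epsilon$-choice also keeps the factors $1/c(x_{\nu_\ell}-z_k+ia-ib/2)$ in the residue remainders analytic); this staggered raising is organized as an induction on the number $M$ of raised contours, with binomial multiplicities combined via Pascal's rule, and only after all contours sit at $C_b+i\epsilon$ are they moved together to $C_b+ir$, where no further poles occur. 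Two further (fixable) discrepancies in your write-up: the paper performs the shifts on the $J_{N-1}$-form of the integrand (exploiting plain $S_{N-1}$-invariance rather than the $u$-covariance of $\rE_{N-1}$), and the factors $-u(x_{\nu_\ell}-x_j)$ do not appear residue by residue as you describe; they emerge only at the end, when the residue factors $\prod_{j\neq\nu_1,\ldots,\nu_L}c(x_{\nu_\ell}-x_j)$ are divided by $C_N(x)$ using the factorization \eqref{CNfact} together with \eqref{u}, and \eqref{rENhom} is used to remove the $ia-ib/2$ shifts from the arguments of $\rE_{N-1}$. To repair your proof you would need to replace the full one-at-a-time shifts by such a staggered scheme (or otherwise justify moving a single contour across a band of width exceeding $a_s$), which is the essential content of the paper's argument.
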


\begin{proof}
To start with, we write the left-hand side of \eqref{rENrep2} as
\be
\frac{1}{(N-1)!}\frac{1}{\rho_N(y)}\cG^{N-1}\int_{\R^{N-1}}dz\, \cK^\sharp_N(x,z)\rE_{N-1}(z,\hat{y}),
\ee
cf.~\eqref{rENrep}--\eqref{rIN} and \eqref{MN}. Here we have introduced 
\be
\hat{y}\equiv (y_1-y_N,\ldots,y_{N-1}-y_N),\ \ \ \cG\equiv \frac{G(ib-ia)}{\sqrt{a_+a_-}}.
\ee
We find it convenient to work at first with $J_{N-1}(z,\hat{y})$, since it is $S_{N-1}$-invariant in $z$. Therefore, we use \eqref{rEN} with $N\to N-1$ to get (cf.~\eqref{cKN} and \eqref{WN})
\be\label{LHSExp}
\frac{1}{(N-1)!}\frac{1}{\rho_N(b;y)}\cG^{N-1}\big(\phi(b)\cG\big)^{(N-1)(N-2)/2}\frac{1}{C_N(b;x)}\frac{\cL_N(x,y)}{C_{N-1}(2a-b;\hat{y})},
\ee
with
\be
\cL_N(b;x,y)\equiv \int_{\R^{N-1}}dz\, W_{N-1}(b;z)\cS^\sharp_N(b;x,z)J_{N-1}(b;z,\hat{y}).
\ee

Letting
\be\label{ep}
0<\epsilon<\beta/2,
\ee
(with $\beta$ defined by~\eqref{beta}), we move the $N-1$ contours $\R$ simultaneously to $C_b-i\epsilon$ without meeting poles. Shifting the $z_1$-contour to $C_b+i\epsilon$, we pick up residues at the poles \eqref{psN} with $k=1$. These poles arise from the factor
\be
c(z_1-x_j-ia+ib/2)=G(z_1-x_j-ib/2)G(x_j-z_1-ib/2)
\ee
in $\cS^\sharp_N(x,z)$ \eqref{cSN}, and the assumption \eqref{diffxN} ensures that they are simple. Recalling the $G$-residue I (A.13), we have
\be
\lim_{z_1\to x_j+ia-ib/2}(z_1-x_j-ia+ib/2)G(x_j-z_1-ib/2)=\lim_{z\to -ia}(-z-ia)G(z)=\frac{\sqrt{a_+a_-}}{2\pi i},
\ee
so that
\be
2\pi i\, \Res\, c(z_1-x_j-ia+ib/2)\arrowvert_{z_1=x_j+ia-ib/2}=\frac{\sqrt{a_+a_-}}{G(ib-ia)}=\cG^{-1}.
\ee
Thus we infer that $\cL_N$ is given by
\begin{multline}\label{cLNExpr2}
\cL_N(x,y)=\int_{C_b+i\epsilon}dz_1\, \int_{(C_b-i\epsilon)^{N-2}}dz_{>1}\, W_{N-1}(z)\cS^\sharp_N(x,z)J_{N-1}(z,\hat{y})\\
+\cG^{-1}\int_{(C_b-i\epsilon)^{N-2}}dz_{>1} \sum_{\nu_1=1}^N \cR_{\nu_1}(x,z_{>1})J_{N-1}((x_{\nu_1}+ia-ib/2,z_{>1}),\hat{y}),
\end{multline}
with remainder residue
\bea\label{cR1}
\cR_{\nu_1}(x,z_{>1}) &= & \prod_{\substack{m,n=2\\m\neq n}}^{N-1}\frac{1}{c(z_m-z_n)}\cdot \prod_{n=2}^{N-1}\frac{1}{c(x_{\nu_1}-z_n+ia-ib/2)c(z_n-x_{\nu_1}-ia+ib/2)}
\nonumber \\
&\times & \prod_{j=1}^N\prod_{k=2}^{N-1}c(z_k-x_j-ia+ib/2)\cdot \prod_{\substack{j=1\\ j\neq\nu_1}}^N c(x_{\nu_1}-x_j)
\nonumber  \\
&=& W_{N-2}(z_{>1})\prod_{\substack{j=1\\ j\neq\nu_1}}^N\prod_{k=2}^{N-1}c(z_k-x_j-ia+ib/2)
\nonumber  \\
& \times &\frac{\prod_{\substack{j=1\\ j\neq\nu_1}}^N c(x_{\nu_1}-x_j)}{\prod_{k=2}^{N-1}c(x_{\nu_1}-z_k+ia-ib/2)}
\nonumber  \\
&=  & W_{N-2}(z_{>1})\cS^\sharp_{N-1}(x(\nu_1),z_{>1})\frac{\prod_{\substack{j=1\\ j\neq\nu_1}}^N c(x_{\nu_1}-x_j)}{\prod_{k=2}^{N-1}c(x_{\nu_1}-z_k+ia-ib/2)}.
\eea

We note that the $\epsilon$-choice \eqref{ep} guarantees that the factors $1/c(x_{\nu_1}-z_k+ia-ib/2)$ are analytic in $z_k$ for $|\im z_k-(a-\re b/2)|\leq\epsilon$. Hence, moving the $z_2$-contours in \eqref{cLNExpr2} up by $2\epsilon$, we only encounter the poles \eqref{psN} with $k=2$. In the residues spawned by the first integral we replace $z_1$ by $z_2$ and use the $S_{N-1}$-invariance of $J_{N-1}(z,\hat{y})$ in $z$ to obtain
\begin{multline}\label{cLNpart}
\int_{(C_b+i\epsilon)^2}dz_1dz_2\int_{(C_b-i\epsilon)^{N-3}}dz_{>2}\, W_{N-1}(z)\cS^\sharp_N(x,z)J_{N-1}(z,\hat{y})\\
+\cG^{-1}\int_{C_b+i\epsilon}dz_2\int_{(C_b-i\epsilon)^{N-3}}dz_{>2}\,\sum_{\nu_1=1}^N \cR_{\nu_1}(x,z_{>1})J_{N-1}((x_{\nu_1}+ia-ib/2,z_{>1}),\hat{y}).
\end{multline}
From the second integral in \eqref{cLNExpr2}, we get a copy of the second integral in \eqref{cLNpart} plus a residue term
\be
\cG^{-2}\int_{(C_b-i\epsilon)^{N-3}}dz_{>2} \sum_{\substack{\nu_1,\nu_2=1\\ \nu_1\neq\nu_2}}^N\cR_{\nu_1,\nu_2}(x,z_{>2})J_{N-1}((x_{\nu_1}+ia-ib/2,x_{\nu_2}+ia-ib/2,z_{>2}),\hat{y}),
\ee
which is readily determined by adapting the computations in \eqref{cR1}:
\be\label{cR2}
\cR_{\nu_1,\nu_2}(x,z_{>2})=W_{N-3}(z_{>2})\cS^\sharp_{N-2}(x(\nu_1,\nu_2),z_{>2})\prod_{\ell=1}^2\frac{\prod_{\substack{j=1\\ j\neq\nu_1,\nu_2}}^Nc(x_{\nu_\ell}-x_j)}{\prod_{k=3}^{N-1}c(x_{\nu_\ell}-z_k+ia-ib/2)}.
\ee
The upshot is that $\cL_N(x,y)$ can be written
\begin{multline}\label{cLNExpr3}
\cL_N(x,y)=\int_{(C_b+i\epsilon)^2}dz_1dz_2\int_{(C_b-i\epsilon)^{N-3}}dz_{>2}\, W_{N-1}(z)\cS^\sharp_N(x,z)J_{N-1}(z,\hat{y})\\
+2\cG^{-1}\int_{C_b+i\epsilon}dz_2\int_{(C_b-i\epsilon)^{N-3}}dz_{>2}\,\sum_{\nu_1=1}^N \cR_{\nu_1}(x,z_{>1})J_{N-1}((x_{\nu_1}+ia-ib/2,z_{>1}),\hat{y})\\
+\cG^{-2}\int_{(C_b-i\epsilon)^{N-3}}dz_{>2} \sum_{\substack{\nu_1,\nu_2=1\\ \nu_1\neq\nu_2}}^N\cR_{\nu_1,\nu_2}(x,z_{>2})J_{N-1}((x_{\nu_1}+ia-ib/2,x_{\nu_2}+ia-ib/2,z_{>2}),\hat{y}),
\end{multline}
with $\cR_{\nu_1}$ and $\cR_{\nu_1,\nu_2}$ given by \eqref{cR1} and \eqref{cR2}, respectively.

More generally, introducing the integration domains
\be
V_L^M\equiv (C_b+i\epsilon)^{M-L}\times (C_b-i\epsilon)^{N-1-M},\ \ \ 1\leq M\leq N-1,\ \ 0\leq L\leq M,
\ee
we claim that $\cL_N(x,y)$ can be written
\begin{multline}\label{cLNExpr4}
\cL_N(x,y)=\int_{V_0^M}dz\, W_{N-1}(z)\cS^\sharp_N(x,z)J_{N-1}(z,\hat{y})\\
+\sum_{L=1}^M\cG^{-L}\binom{M}{L}\int_{V_L^M}dz_{>L}\sum_{\substack{\nu_1,\ldots,\nu_L=1\\ \nu_j\neq\nu_k}}^N \cR_{\nu_1,\ldots,\nu_L}(x,z_{>L})\\
\times J_{N-1}((x_{\nu_1}+ia-ib/2,\ldots,x_{\nu_L}+ia-ib/2,z_{>L}),\hat{y}),
\end{multline}
for any $M=1,\ldots,N-1$. Here we have introduced
\begin{multline}\label{cRL}
\cR_{\nu_1,\ldots,\nu_L}(x,z_{>L}) \equiv W_{N-1-L}(z_{>L})\cS^\sharp_{N-L}(x(\nu_1,\ldots,\nu_L),z_{>L})\\
\times \prod_{\ell=1}^L\frac{\prod_{\substack{j=1\\ j\neq\nu_1,\ldots,\nu_L}}^Nc(x_{\nu_\ell}-x_j)}{\prod_{k=L+1}^{N-1}c(x_{\nu_\ell}-z_k+ia-ib/2)},\ \ \ L=1,\ldots,N-2, \ \ \  L\le M,
\end{multline}
whereas for $L=M=N-1$ the integral should be omitted and we have
\be
\cR_{\nu_1,\ldots,\nu_{N-1}}(x)\equiv \prod_{\ell=1}^{N-1} c(x_{\nu_\ell}-x_{\nu_N}),\ \ \ \{\nu_1,\ldots,\nu_N\}=\{1,\ldots,N\}.
\ee
By \eqref{cLNExpr2}--\eqref{cR1} and \eqref{cR2}--\eqref{cLNExpr3}, we know already that the claim holds true for $M=1,2$. Assuming \eqref{cLNExpr4} for $1\leq M\leq N-2$, we now prove its validity for $M\to M+1$.

To this end, we move the $z_{M+1}$-contours up by $2\epsilon$, meeting the simple poles
\be
z_{M+1}=x_{\nu_1}+ia-ib/2,\ \ \ \nu_1=1,\ldots,N,
\ee
in the first integral, and the simple poles
\be
z_{M+1}=x_{\nu_{L+1}}+ia-ib/2,\ \ \ \nu_{L+1}=1,\ldots,N,\ \ \nu_{L+1}\neq\nu_1,\ldots,\nu_L,
\ee
in the remaining integrals. Using $S_{N-1}$-invariance of $J_{N-1}(z,\hat{y})$ in~$z$, it is readily seen that the first integral yields, upon taking $z(M+1)\to z_{>1}$ in the residue integral,
\begin{multline}\label{intTerms1}
\int_{V_0^{M+1}}dz\, W_{N-1}(z)\cS^\sharp_N(x,z)J_{N-1}(z,\hat{y})\\
+\cG^{-1}\int_{V_1^{M+1}}dz_{>1}\sum_{\nu_1=1}^N\cR_{\nu_1}(x,z_{>1})J_{N-1}((x_{\nu_1}+ia-ib/2,z_{>1}),\hat{y}).
\end{multline}
Similarly, the $L$-summand with $L=1,\ldots,M$ yields, after taking $z_{>L}(M+1)\to z_{>L+1}$ in the residue integral,
\begin{multline}\label{intTerms2}
\cG^{-L}\binom{M}{L}\int_{V_L^{M+1}}dz_{>L}\sum_{\substack{\nu_1,\ldots,\nu_L=1\\ \nu_j\neq\nu_k}}^N \cR_{\nu_1,\ldots,\nu_L}(x,z_{>L})\\
\times J_{N-1}((x_{\nu_1}+ia-ib/2,\ldots,x_{\nu_L}+ia-ib/2,z_{>L}),\hat{y})\\
+\cG^{-L-1}\binom{M}{L}\int_{V_{L+1}^{M+1}}dz_{>L+1}\sum_{\substack{\nu_1,\ldots,\nu_{L+1}=1\\ \nu_j\neq\nu_k}}^N \cR_{\nu_1,\ldots,\nu_{L+1}}(x,z_{>L+1})\\
\times J_{N-1}((x_{\nu_1}+ia-ib/2,\ldots,x_{\nu_{L+1}}+ia-ib/2,z_{>{L+1}}),\hat{y}).
\end{multline}
Summing the terms \eqref{intTerms2} over $L=1,\ldots,M$ and adding the resulting expression to \eqref{intTerms1}, we arrive at the right-hand side of \eqref{cLNExpr4} with $M\to M+1$ by invoking Pascal's rule
\be
\binom{M}{L}+\binom{M}{L-1}=\binom{M+1}{L}.
\ee
Hence our claim is proved.

Next, we specialize \eqref{cLNExpr4} to $M=N-1$ and shift all contours up to $C_b+ir$ without encountering further poles. Using symmetry under permutations of $x_{\nu_1},\ldots,x_{\nu_L}$, we thus obtain
\begin{multline}\label{cLNExpr5}
\cL_N(x,y)=\int_{(C_b+ir)^{N-1}}dz\, W_{N-1}(z)\cS^\sharp_N(x,z)J_{N-1}(z,\hat{y})\\
+(N-1)!\sum_{L=1}^{N-2}\cG^{-L}\frac{1}{(N-1-L)!}\int_{(C_b+ir)^{N-1-L}}dz_{>L}\sum_{1\leq\nu_1<\cdots<\nu_L\leq N}\cR_{\nu_1,\ldots,\nu_L}(x,z_{>L})\\
\times J_{N-1}((x_{\nu_1}+ia-ib/2,\ldots,x_{\nu_L}+ia-ib/2,z_{>L}),\hat{y})\\
+(N-1)!\cG^{1-N}\sum_{1\leq\nu_1<\cdots<\nu_{N-1}\leq N}R_{\nu_1,\ldots,\nu_{N-1}}(x)J_{N-1}((x_{\nu_1}+ia-ib/2,\ldots,x_{\nu_{N-1}}+ia-ib/2),\hat{y}).
\end{multline}

In order to establish the representation \eqref{rENrep2}, we now  reformulate \eqref{cLNExpr5} in terms of $\rE_{N-1}$. From \eqref{rEN} and \eqref{CN}, we infer
\begin{multline}
J_{N-1}((x_{\nu_1}+ia-ib/2,\ldots,x_{\nu_L}+ia-ib/2,z_{>L}),\hat{y})\\
=(\phi(b)\cG)^{-(N-1)(N-2)/2}\rE_{N-1}((x_{\nu_1}+ia-ib/2,\ldots,x_{\nu_L}+ia-ib/2,z_{>L}),\hat{y})\\
\times C_{N-1}(2a-b;\hat{y})C_L(x_{\nu_1},\ldots,x_{\nu_L})C_{N-1-L}(z_{>L})\\
\times \prod_{\ell=1}^L\prod_{k=L+1}^{N-1}c(x_{\nu_\ell}-z_k+ia-ib/2).
\end{multline}
Combining \eqref{cRL} with \eqref{WN} and \eqref{cKN}, we deduce
\be
\begin{split}
\cR_{\nu_1,\ldots,\nu_L}(x,z_{>L}) &= \cK^\sharp_{N-L}(x(\nu_1,\ldots,\nu_L),z_{>L})\frac{C_{N-L}(x(\nu_1,\ldots,\nu_L))}{C_{N-1-L}(z_{>L})}\\
&\quad \times \prod_{\ell=1}^L\frac{\prod_{\substack{j=1\\ j\neq\nu_1,\ldots,\nu_L}}^Nc(x_{\nu_\ell}-x_j)}{\prod_{k=L+1}^{N-1}c(x_{\nu_\ell}-z_k+ia-ib/2)}.
\end{split}
\ee
It follows that
\begin{multline}\label{ratExpr}
\cR_{\nu_1,\ldots,\nu_L}(x,z_{>L})J_{N-1}((x_{\nu_1}+ia-ib/2,\ldots,x_{\nu_L}+ia-ib/2,z_{>L}),\hat{y})/C_{N-1}(2a-b;\hat{y})\\
=(\phi(b)\cG)^{-(N-1)(N-2)/2}\rE_{N-1}((x_{\nu_1}+ia-ib/2,\ldots,x_{\nu_L}+ia-ib/2,z_{>L}),\hat{y})\\
\times \cK^\sharp_{N-L}(x(\nu_1,\ldots,\nu_L),z_{>L})C_L(x_{\nu_1},\ldots,x_{\nu_L})C_{N-L}(x(\nu_1,\ldots,\nu_L))\\
\times \prod_{\ell=1}^L\prod_{\substack{j=1\\ j\neq\nu_1,\ldots,\nu_L}}^Nc(x_{\nu_\ell}-x_j).
\end{multline}
Since $\nu_1<\cdots<\nu_L$ in \eqref{cLNExpr5}, we can write
\be
\begin{split}\label{CNfact}
C_N(x) &= C_L(x_{\nu_1},\ldots,x_{\nu_L})C_{N-L}(x(\nu_1,\ldots,\nu_L))\\
&\quad \times \prod_{\ell=1}^L\Big(\prod_{\substack{j<\nu_\ell\\ j\neq\nu_1,\ldots,\nu_{\ell-1}}}c(x_j-x_{\nu_\ell})\prod_{\substack{j>\nu_\ell\\ j\neq\nu_{\ell+1},\ldots,\nu_L}}c(x_{\nu_\ell}-x_j)\Big).
\end{split}
\ee
Multiplying \eqref{cLNExpr5} by the prefactors in \eqref{LHSExp} and using \eqref{ratExpr}--\eqref{CNfact}, \eqref{u} and \eqref{rENhom}, we arrive at the right-hand side of \eqref{rENrep2}.
\end{proof}

We proceed to analyze the asymptotic behavior of $\rE_N(x,y)$ for $d_N(y)\to\infty$ using the representation \eqref{rENrep2}.
To this end we need several bounds on the $c$- and $u$-functions, which we derive from the asymptotic estimates~\eqref{casym} and~\eqref{uasym}. 

First,  combining \eqref{casym} with holomorphy of $c(b;z)$ for $(b,\im z)\in S_a\times (0,a_s)$, we obtain a majorization
\be\label{GratBou}
|c(b;p+ir)|\leq c(r,b)\exp(-\gamma |p|),\ \ \ (r,b,p)\in (0,a_s)\times S_a\times\R,
\ee
where we have set
\be
\ga\equiv \alpha\re b/2=\frac{\pi\re b}{a_+a_-},
\ee
and where  $c(r,b)$ is continuous on $(0,a_s)\times S_a $. Likewise, recalling $G(ia)=0$, we get
\be\label{cBou}
|1/c(b;z)|\leq C(b)|\sinh(\ga z)|,\ \ \ (b,z)\in S_a\times\R,
\ee
with $C(b)$ continuous on $S_a $. Finally, letting $b\in S_a$, we note that $1/c(b;z)$ is holomorphic for $\im z\in (-2a,\re b)$. Combining this with~\eqref{casym}, we conclude
\be\label{cbo}
|1/c(b;z)|\le c(b)\exp (\gamma |\re z|),\ \ \ (b,\im z)\in S_a\times [-a_s,0],
\ee
with $c(b)$ continuous on $S_a$.

Turning to the $u$-function~\eqref{u}, we let  $b\in S_a$. Then $u(b;z)$ is holomorphic in the strip $\im z\in (-\min(\re b,2a-\re b),a_s)$. Combining this with~\eqref{uasym}, we readily infer 
 \be\label{ubo}
 |u(b;-z)|\le c(b,\im z),\ \ \ \ (b,\im z)\in S_a\times (-a_s,0],
 \ee
where $c(b,\im z)$ is continuous on $S_a\times (-a_s,0]$.

With these preliminaries out of the way, we return to the function~$\rE_N(x,y)$.
Recalling the symmetry relation $\phi(2a-b)=\phi(b)$ (cf.~\eqref{phi}) and combining this with~\eqref{casym} and~\eqref{cBou}, we find 
\be\label{MNas}
|M_N(b;y)-1|\le  c(b, \rho)\exp(-\alpha\rho d_N(y)),\ \ \ (b, y,\rho)\in S_a\times \R^N\times  [a_s/2,a_s),\ \ \  d_N(y)\ge 0,
\ee
where $c(b, \rho)$ is continuous on $S_a\times [a_s/2,a_s)$.
Moreover, by the induction assumption, we may invoke Theorem \ref{Thm:ubound} after substituting $N\to N-1$. Combining the resulting bound on $\rE_{N-1}$ with the $c$-function estimates just assembled, it is readily verified that both $\rho_N(y)^{-1}\rI_N(x,y,z)$ and $\rho_N(y)^{-1}\hat{\rI}_{N;\nu_1,\ldots,\nu_L}(x,y,z_{>L})$, $L=1,\ldots,N-2$, decay exponentially as $d_N(y)\to\infty$.
 This suggests that the dominant asymptotics of $\rE_N(x,y)$
arises from the last sum in \eqref{rENrep2}.
 
To show that this is indeed the case, we first  observe that the function $\rE_{N-1}^{{\rm as}}(z,w)$ \eqref{ENsc} can be rewritten
\be\label{Easym}
\rE_{N-1}^{{\rm as}}(z,w)=\sum_{\tau\in S_{N-1}}\frac{C_{N-1}(z_\tau)}{C_{N-1}(z)}\exp(i\alpha z_\tau\cdot w).
\ee
Next, taking $N\to N-1$ in Theorem \ref{Thm:rENas}, we deduce from the induction assumption and~\eqref{Easym} that we have
\begin{multline}
\exp(i\alpha y_N(x_1+\cdots+x_N))\rE_{N-1}(x(\nu),(y_1-y_N,\ldots,y_{N-1}-y_N))\\
=\sum_{\substack{\sigma\in S_N\\ \sigma(N)=\nu}}\frac{C_{N-1}(x_{\sigma(1)},\ldots,x_{\sigma(N-1)})}{C_{N-1}(x(\nu))}\exp(i\alpha x_\sigma\cdot y)+R_\nu(x,y),
\end{multline}
where the remainder satisfies a bound
\be\label{RnuBou}
|R_\nu(b;x,y)|\leq C(r,b)P_{N-1}(\gamma |x(\nu)_1|,\ldots,\gamma |x(\nu)_{N-1}|) \exp(-\alpha rd_{N-1}(y_1,\ldots,y_{N-1})),
\ee
which holds for all $(b,x,y)\in S_l\times\R^N\times\R^N$ with $d_{N-1}(y_1,\ldots,y_{N-1})\geq 0$. Here $C(r,b)$ is continuous on $ [a_s/2,a_s)\times S_l$ and $P_{N-1}$ is a polynomial of degree $\leq (N-1)(N-2)/2$ with positive and constant coefficients. Now, for any $\sigma\in S_N$ such that $\sigma(N)=\nu$, we have an identity
\begin{eqnarray}
\frac{C_N(x(\nu),x_\nu)C_{N-1}(x_{\sigma(1)},\ldots,x_{\sigma(N-1)})}{C_{N-1}(x(\nu))} & = &\prod_{\substack{j=1\\ j\neq\nu}}^N c(x_j-x_{\sigma(N)})\cdot \prod_{1\leq j<k\leq N-1}c(x_{\sigma(j)}-x_{\sigma(k)})
\nonumber \\
&  =  &  C_N(x_\sigma).
\end{eqnarray}
Thus we obtain, using~\eqref{Easym} with $N-1\to N$,
\begin{multline}\label{rEN-1sum}
\exp(i\alpha y_N(x_1+\cdots+x_N))\sum_{\nu=1}^N\frac{C_N(x(\nu),x_\nu)}{C_N(x)}\rE_{N-1}(x(\nu),(y_1-y_N,\ldots,y_{N-1}-y_N))\\
=\sum_{\sigma\in S_N}\frac{C_N(x_\sigma)}{C_N(x)}\exp(i\alpha x_\sigma\cdot y)+R(x,y)
=E_N^{{\rm as}}(x,y)+R(x,y),
\end{multline}
with remainder
\be
R(x,y)\equiv \sum_{\nu=1}^N \frac{C_N(x(\nu),x_\nu)}{C_N(x)}R_\nu(x,y).
\ee
We note that an exponential decay bound for $R$ is readily inferred from the bound \eqref{RnuBou} for $R_\nu$. Indeed, after multiplying $|R_{\nu}|$ by $|C_N(x(\nu),x_\nu)/C_N(x)|$ and summing over $\nu=1,\ldots,N$, we need only invoke the $u$-bound \eqref{ubo}.

In the following theorem our starting point is \eqref{rENrep2}, rewritten as
\begin{multline}\label{rENrep3}
(\rE_N-\rE_N^{\rm as})(x,y)=(M_N(y)-1)\rE_N^{\rm as}(x,y)+M_N(y)R(x,y) \\
+\exp(i\alpha y_N(x_1+\cdots+x_N))\frac{M_N(y)}{\rho_N(y)}\Bigg[\frac{1}{(N-1)!}\left(\frac{G(ib-ia)}{\sqrt{a_+a_-}}\right)^{N-1}\int_{(C_b+ir)^{N-1}}dz\, \rI_N(x,y,z)\\
+\sum_{L=1}^{N-2}\frac{1}{(N-1-L)!}\left(\frac{G(ib-ia)}{\sqrt{a_+a_-}}\right)^{N-1-L}\sum_{1\leq\nu_1<\cdots<\nu_L\leq N}\cU_{\nu_1,\ldots,\nu_L}(x)\\
\times \int_{(C_b+ir)^{N-1-L}}dz_{>L}\, \hat{\rI}_{N;\nu_1,\ldots,\nu_L}(x,y,z_{>L})\Bigg] ,
\end{multline}
where we have used~\eqref{rEN-1sum}. In view of our considerations above, we need only majorize the expression in square brackets on the right-hand side to infer exponential decay of the left-hand side with rate $\alpha r$ as $d_N(y)\to\infty$. As an immediate corollary, we obtain the `unitary asymptotics' \eqref{ENsc} of $\rE_N$.
 
\begin{theorem}\label{Thm:rENas}
Letting $(r,b)\in [a_s/2,a_s)\times S_l$, we have
\be
|(\rE_N-\rE_N^{\rm as})(b;x,y)|<C(r,b)P_N(\gamma |x_1|,\ldots,\gamma |x_N|) \exp(-\alpha rd_N(y)),
\ee
for all $x,y\in\R^N$ with $d_N(y)>0$, where $C$ is continuous on $[a_s/2,a_s)\times S_l$ and $P_N$ is a polynomial of degree $\leq N(N-1)/2$ with positive and constant coefficients.
\end{theorem}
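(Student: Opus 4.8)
The plan is to start from the rewritten representation \eqref{rENrep3} and reduce everything to majorizing the expression in square brackets, call it $B_N(x,y)$, since the two remaining terms on the right-hand side are already under control: $(M_N(y)-1)\rE_N^{\rm as}(x,y)$ decays at rate $\alpha\rho d_N(y)$ with $\rho\in[a_s/2,a_s)$ by the bound \eqref{MNas} on $M_N$, after noting that $\rE_N^{\rm as}(x,y)$ is a sum of $N!$ terms each bounded by $\prod_{j<k}|u(x_k-x_j)|$-type factors which the bound \eqref{ubo} (applied on the real axis, where $|u|=1$ by \eqref{uphase}) converts into a polynomial in $\gamma|x_j|$; and $M_N(y)R(x,y)$ decays at rate $\alpha r d_N(y)$ with the required polynomial prefactor, as already observed from the $R_\nu$-bound \eqref{RnuBou} together with the $u$-bound \eqref{ubo} applied to $|C_N(x(\nu),x_\nu)/C_N(x)|=\prod_{j\ne\nu}|1/u(x_j-x_\nu)|$. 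So the whole proof hinges on showing
\be\label{plan:key}
\left|\frac{M_N(y)}{\rho_N(y)}B_N(x,y)\right|\le C(r,b)P_N(\gamma|x_1|,\ldots,\gamma|x_N|)\exp(-\alpha r d_N(y)).
\ee

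The key step is to extract the decay rate $\alpha r d_N(y)$ from each of the $N-1$ integral terms in $B_N$. In the $L=0$ term, $\rI_N(x,y,z)=\cK^\sharp_N(x,z)\rE_{N-1}(z,\hat y)$ with the $z$-contours on $C_b+ir$; I would write $z_k=p_k+i(a-\re b/2+r)$, bound $\rE_{N-1}(z,\hat y)$ by the induction hypothesis (Theorem \ref{Thm:ubound} with $N\to N-1$, which gives a bound of the form $C\cdot P\cdot\exp(\alpha r' d_{N-1}(\hat y))$ times polynomial growth in the $p_k$ — this is where I need the precise form of that as-yet-unstated bound), and bound $\cK^\sharp_N(x,z)$ using \eqref{cKN}, the estimates \eqref{GratBou} for the $c$-factors carrying a $z_k$ (giving Gaussian-in-$p_k$ decay $\exp(-\gamma|p_k|)$, enough to render the $p$-integral convergent after the polynomial growth of $\rE_{N-1}$), and \eqref{ubo}/\eqref{cBou} for the $x$-dependent $c$-factors not carrying any $z_k$ (producing the polynomial $P_N(\gamma|x_j|)$). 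Crucially, the explicit $y$-dependence of $\rI_N$ sits in $\hat y=(y_1-y_N,\ldots,y_{N-1}-y_N)$ inside $\rE_{N-1}$ and in the factor $1/\rho_N(y)$ (via $M_N$), and the combination must yield $\exp(-\alpha r d_N(y))$. This is the arithmetic one has to check: $\rho_N(y)^{-1}=\exp(\alpha(a-\re b/2)\sum_{n<N}(y_n-y_N))$ grows, but $\rE_{N-1}(z,\hat y)$ with $z$ on $C_b+ir$ and the explicit factor $\exp(i\alpha y_N(x_1+\cdots+x_N))$ contribute compensating exponentials in the $y_n$; after cancellation the surviving rate is governed by $d_N(y)$, and choosing the contour height $r$ optimally gives precisely $\alpha r d_N(y)$. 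The $L\ge 1$ terms are handled identically: $\hat\rI_{N;\nu_1,\ldots,\nu_L}$ has the same structure with $N-L$ in place of $N$, some $z_k$-arguments frozen at $x_{\nu_\ell}+ia-ib/2$, and the $\cU_{\nu_1,\ldots,\nu_L}(x)$ prefactor is a product of $u$-values on the real axis, hence of modulus $1$ by \eqref{uphase}, contributing nothing to the bound.

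The main obstacle I anticipate is precisely the bookkeeping in \eqref{plan:key}: one must verify that the polynomial growth in the integration variables $p_k$ coming from the induction-hypothesis bound on $\rE_{N-1}$ is genuinely dominated by the $\exp(-\gamma\sum|p_k|)$ decay furnished by the $N-1-L$ factors $c(z_k-x_j-ia+ib/2)$ in $\cK^\sharp$ (as indicated in the excerpt's remark that there are $N-1$ numerator $c$-factors against $N-2$ denominator ones, i.e.\ a net surplus), and—more delicately—that after pulling out $1/\rho_N(y)$ and recombining with the $\hat y$-dependence and the factor $1/\prod_n c(2a-b;y_n-y_N)$ hidden in $M_N$, the exponential rate in $y$ is exactly $-\alpha r d_N(y)$ and not something weaker. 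A secondary subtlety is uniformity: $C(r,b)$ must stay continuous up to $r\to a_s^-$ and over $b\in S_l$, which forces one to track the continuity of all the constants $c(r,b)$, $C(b)$, $c(b,\im z)$ appearing in \eqref{GratBou}--\eqref{ubo} and of the induction constant, and to check that no denominator degenerates on $S_l$ (this is where the restriction $\re b\le a_l$, equivalently $\beta=\min(\re b,a_s)$ and the $x$-independent-pole analysis preceding Lemma \ref{Lem:auxN}, is used). Once \eqref{plan:key} is in hand, combining the three pieces of \eqref{rENrep3} and taking $P_N$ to absorb all polynomial prefactors (its degree being at most $N(N-1)/2$, matching the number of $c$-factors in $C_N$) completes the proof; the corollary \eqref{ENsc} then follows by letting $d_N(y)\to\infty$.
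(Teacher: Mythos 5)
Your overall strategy coincides with the paper's (reduce via \eqref{rENrep3} to the bracketed integrals, shift to real contours, apply the induction hypothesis to $\rE_{N-1}$, and use \eqref{GratBou}, \eqref{cBou} on the $c$-factors), and your treatment of the $y$-rate is essentially right: after the shift $z_k\to z_k+i(a-b/2+r)$ and the translation identity \eqref{rENhom}, the factor $\rho_N(y)\exp(-\alpha r\sum_m(y_m-y_N))$ comes out exactly, as in \eqref{rINexpr} and \eqref{rhINexpr}. But there is a genuine gap in how you obtain the \emph{polynomial} prefactor in $|x_j|$. You assert that the $x$-dependent $c$-factors in $\cK^\sharp_N$ not carrying a $z_k$ are converted into $P_N(\gamma|x_j|)$ via \eqref{ubo}/\eqref{cBou}. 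That is not what these bounds give: the factor $1/C_N(x)$ (and likewise $1/C_{N-1}(-z)$) is bounded through \eqref{cBou} by products of $|\sinh(\gamma(x_j-x_k))|$, i.e.\ it grows \emph{exponentially} in the $x$-differences, and inside the integrals it is not a product of $u$-functions, so \eqref{ubo} does not apply to it. The only decay available against this growth is the factor $\prod_{j,k}c(z_k+ir-x_j)$, and the net exponent is $F_{N-1}(\gamma x,\gamma z)$ of \eqref{FL}, which is merely $\le 0$ and vanishes on large regions. The nontrivial fact that $\int_{\R^{N-1-L}}dz\,\cP(\gamma|z|)\exp(F(\gamma x,\gamma z))$ is bounded by a polynomial in the $\gamma|x_j|$ (of degree increasing only by the number of integrations) is exactly Theorem~\ref{Thm:IPLbou}, cf.\ \eqref{IPLbo}, and the paper's proof of the present theorem hinges on it, together with the expansion \eqref{PN-1exp} needed in the $L\ge 1$ terms to separate the frozen variables $x_{\nu_1},\ldots,x_{\nu_L}$ from the integration variables before that theorem can be applied. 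Your proposal frames the remaining difficulty as convergence of the $p$-integral and the $y$-rate arithmetic, but the real obstruction is this exponential-versus-polynomial bookkeeping in $x$; without Theorem~\ref{Thm:IPLbou} (or a substitute for it) your argument only yields a bound exponential in the $x$-differences, which does not prove the stated estimate.

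A secondary point: for $b\in S_l$ with $\im b\neq 0$ one does not have $|u(b;x_k-x_j)|=1$, so \eqref{uphase} cannot be used to discard the factors $\cU_{\nu_1,\ldots,\nu_L}(x)$ or the $u$-products in $\rE_N^{\rm as}$ and in $C_N(x(\nu),x_\nu)/C_N(x)$; the paper instead bounds all of these by constants via \eqref{ubo} with $\im z=0$. This is easily repaired, but it should be \eqref{ubo}, not \eqref{uphase}, that carries the argument.
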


\begin{proof}
In view of~\eqref{ubo} (with $\im z =0$),  it suffices to establish the bounds
\be\label{INbou}
\left|\int_{(C_b+ir)^{N-1}}dz\, \rI_N(x,y,z)\right|\leq C_0(r,b)|\rho_N(y)|P_{N,0}(\gamma |x_1|,\ldots,\gamma |x_N|)\exp(-\alpha rd_N(y)),
\ee
\begin{multline}\label{hrINbou}
\left|\int_{(C_b+ir)^{N-1-L}}dz_{>L}\, \hat{\rI}_{N;\nu_1,\ldots,\nu_L}(x,y,z_{>L})\right|\\
\leq C_L(r,b)|\rho_N(y)|P_{N,L}(\gamma |x_1|,\ldots,\gamma |x_N|)\exp(-\alpha rd_N(y)),\ \ \ L=1,\ldots,N-2,
\end{multline}
for all $x,y\in\R^N$ with $d_N(y)>0$. Here the functions $C_0$, $C_L$ are continuous on $[a_s/2,a_s)\times S_l$ and $P_{N,0}$, $P_{N,L}$ are polynomials of degree $\leq N(N-1)/2-L$ with positive and constant coefficients.

Taking $z_k\to z_k+i(a-b/2+r)$, we infer from the identity \eqref{rENhom} with $N\to N-1$ that
\begin{multline}\label{rINexpr}
\int_{(C_b+ir)^{N-1}}dz\, \rI_N(x,y,z)=\rho_N(y)\exp\Big(-\alpha r\sum_{m=1}^{N-1}(y_m-y_N)\Big)C_N(x)^{-1}\\
\times \int_{\R^{N-1}}dz\, \frac{\rE_{N-1}(z,(y_1-y_N,\ldots,y_{N-1}-y_N))}{C_{N-1}(-z)}\prod_{j=1}^N\prod_{k=1}^{N-1}c(z_k+ir-x_j).
\end{multline}
Now by the induction assumption, Theorem \ref{Thm:ubound} holds true when $N$ is replaced by $N-1$. Combining the resulting bound on $\rE_{N-1}$ with \eqref{GratBou} and~\eqref{cBou}, we deduce
\be
\begin{split}
\left|\int_{(C_b+ir)^{N-1}}dz\, \rI_N(x,y,z)\right| &\leq C_0(r,b)|\rho_N(y)|\exp\Big(-\alpha r\sum_{m=1}^{N-1}(y_m-y_N)\Big)\\
&\quad \times \int_{\R^{N-1}}dz\, P_{N-1}(\gamma |z_1|,\ldots,\gamma |z_{N-1}|)\exp( F_{N-1}(\gamma x,\gamma z)),
\end{split}
\ee
where $F_{N-1}$ is given by \eqref{FL} and $P_{N-1}$ is a polynomial of degree $\leq (N-1)(N-2)/2$ with positive and constant coefficients. The bound \eqref{INbou} is now a direct consequence of Theorem \ref{Thm:IPLbou}.

We proceed to prove \eqref{hrINbou}. Taking $z_k\to z_k+i(a-b/2+r)$, $L<k\leq N-1$, and using once more \eqref{rENhom}, we obtain
\begin{multline}\label{rhINexpr}
\int_{(C_b+ir)^{N-1-L}}dz_{>L}\, \hat{\rI}_{N;\nu_1,\ldots,\nu_L}(x,y,z_{>L})=\rho_N(y)C_{N-L}(x(\nu_1,\ldots,\nu_L))^{-1}\\
\times \int_{\R^{N-1-L}}dz_{>L}\, \rE_{N-1}((x_{\nu_1},\ldots,x_{\nu_L},z_{L+1}+ir,\ldots,z_{N-1}+ir),(y_1-y_N,\ldots,y_{N-1}-y_N))\\
\times \frac{1}{C_{N-1-L}(-z_{>L})}
 \prod_{\substack{j=1\\ j\neq \nu_1,\ldots,\nu_L}}^N\prod_{k=L+1}^{N-1}c(z_k+ir-x_j).
\end{multline}
By Theorem \ref{Thm:ubound} with $N\to N-1$ and \eqref{GratBou}--\eqref{cBou}, it follows that
\begin{multline}\label{hrINbou2}
\left|\int_{(C_b+ir)^{N-1-L}}dz_{>L}\, \hat{\rI}_{N;\nu_1,\ldots,\nu_L}(x,y,z_{>L})\right|\leq C_L(r,b)|\rho_N(y)|\exp\Big(-\alpha r\sum_{m=L+1}^{N-1}(y_m-y_N)\Big)\\
\times \int_{\R^{N-1-L}}dz_{>L}\, P_{N-1}(\gamma |x_{\nu_1}|,\ldots,\gamma |x_{\nu_L}|,\gamma |z_{L+1}|, \ldots,\gamma |z_{N-1}|)\\
\times \exp\big( F_{N-1-L}(\gamma x(\nu_1,\ldots,\nu_L),\gamma z_{>L})\big).
\end{multline}
Since $P_{N-1}$ is a polynomial of degree $\leq (N-1)(N-2)/2$ with positive, constant coefficients, we have
\begin{multline}\label{PN-1exp}
P_{N-1}(\gamma |x_{\nu_1}|,\ldots,\gamma |x_{\nu_L}|,\gamma |z_{L+1}|,\ldots,\gamma |z_{N-1}|)\\
=\sum_{\substack{k\in\N^L\\ |k|\leq (N-1)(N-2)/2}} \gamma^{|k|}|x_{\nu_1}|^{k_1}\cdots |x_{\nu_L}|^{k_L} P^k_{N-1,L}(\gamma |z_{L+1}|,\ldots,\gamma |z_{N-1}|),
\end{multline}
for some polynomials $P^k_{N-1,L}$ of degree $\leq (N-1)(N-2)/2-|k|$ with positive, constant coefficients, where $|k|\equiv k_1+\cdots+k_L$. Substituting this expansion in \eqref{hrINbou2}, we can use Theorem \ref{Thm:IPLbou} to bound each term separately. Indeed, from \eqref{IPL}--\eqref{IPLbo} we get
\begin{multline}\label{intPbou}
\int_{\R^{N-1-L}}dz_{>L}\, P^k_{N-1,L}(\gamma |z_{L+1}|,\ldots,\gamma |z_{N-1}|)\exp\big(F_{N-1-L}(\gamma x(\nu_1,\ldots,\nu_L),\gamma z_{>L})\big)\\
< P^k_{N,L}((\gamma |x_j|)_{j\neq \nu_1,\ldots,\nu_L}),
\end{multline}
for some polynomials $P^k_{N,L}$ of degree
\be
\deg\, P^k_{N,L}\leq (N-1)(N-2)/2-|k|+N-1-L=N(N-1)/2-|k|-L,
\ee
with positive, constant coefficients. The bounds \eqref{hrINbou2} and \eqref{intPbou} clearly imply the desired majorization \eqref{hrINbou}.
\end{proof}

We proceed to obtain a bound on $\rE_N(x,y)$ for $x,y\in\C^N\times\R^N$ satisfying
\be\label{xyres}
v_j-v_k\in(-a_s,0],\ \ 1\leq j<k\leq N,\ \ d_N(y)>0, \ \ \ v=\im x.
\ee
Like in the $N=2$ and $N=3$ cases treated in II, we take as a starting point the representation for $\rE_N$ given by \eqref{rENrep2}.   

We first derive the desired bound for the last sum in~\eqref{rENrep2}.  To begin with, from \eqref{MNas}  we easily get
\begin{eqnarray}\label{MNexp}
|M_N(b;y)\exp(i\alpha y_N(x_1+\cdots +x_N))|& < &  c(b)\exp\Big(-\alpha\sum_{j=1}^Ny_jv_j\Big)
\nonumber \\
&  \times & \exp\Big(\alpha\sum_{k=1}^{N-1}(y_k-y_N)v_k\Big),
\end{eqnarray}
for all $(b,x,y)\in S_a\times \C^N\times\R^N$, with $c(b)$ continuous on $S_a$. Using next Theorem~\ref{Thm:ubound} with $N\to N-1$, we get an estimate 
\begin{eqnarray}\label{ENbo}
|\rE_{N-1}(x(\nu),(y_1-y_N,\ldots,y_{N-1}-y_N))| & < &  C(\de,b)P_{N-1}(\gamma |\re x(\nu)_1|,\ldots,\gamma |\re x(\nu)_{N-1}|)
\nonumber \\
 & \times & \exp\Big(-\alpha\sum_{k=1}^{N-1}(y_k-y_N)\im x(\nu)_k\Big),
\end{eqnarray}
where $P_{N-1}$ is a polynomial of degree $\leq (N-1)(N-2)/2$ with positive and constant coefficients.
Now when we take the product $\Pi_{\nu}$ of the functions on the left-hand sides of~\eqref{MNexp} and~\eqref{ENbo}, we can use the majorization
\begin{multline}
\exp\Big(\alpha\sum_{k=1}^{N-1}(y_k-y_N)v_k\Big) \exp\Big(-\alpha\sum_{k=1}^{N-1}(y_k-y_N)\im x(\nu)_k\Big)\\
=\exp\Big(\alpha\sum_{k=\nu }^{N-1}(y_k-y_N)(v_k-v_{k+1})\Big)\le  1, \ \ \ d_N(y)>0, \ v_k-v_{k+1}\le 0  ,\ k=1,\ldots, N-1,
\end{multline}
to conclude that the product of $\Pi_{\nu}$ and the pertinent $u$-function product satisfies a bound of the type occurring in~\eqref{rENubou}, cf.~\eqref{rENrep2} and~\eqref{ubo}. (Indeed,  from \eqref{u} and the $G$-pole locations I (A.11), we infer regularity of $u(b;x_k-x_j)$ for $-a_s<v_j-v_k<\min(\re b,2a-\re b)$.) 

\begin{theorem}\label{Thm:ubound}
Letting $(\de,b)\in (0,a_s]\times S_l$, we have
\be\label{rENubou}
|\rE_N(b;x,y)|<C(\de,b)P_N(\gamma |\re x_1|,\ldots,\gamma |\re x_N|)\exp\Big(-\alpha\sum_{j=1}^N y_jv_j\Big),
\ee
for all $(x,y)\in\C^N\times\R^N$ satisfying
\be\label{xyas}
v_j-v_k\in[-a_s+\de,0] ,\ \ 1\leq j<k\leq N,\ \ \ d_N(y)>0,\ \ \ v=\im x,
\ee
where $C(\de,b)$ is a continuous function on $(0,a_s]\times S_l$ and $P_N$ is a polynomial of degree $\leq N(N-1)/2$ with positive and constant coefficients.
\end{theorem}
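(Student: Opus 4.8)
The plan is to start, exactly as for $N=2,3$ in II, from the representation \eqref{rENrep2}, which writes $\rE_N(x,y)$ --- up to the prefactor $M_N(y)\exp(i\alpha y_N(x_1+\cdots+x_N))$ --- as a sum of the contour integrals over $(C_b+ir)^{N-1-L}$ divided by $\rho_N(y)$, plus the last sum $\sum_{\nu=1}^N\frac{C_N(x(\nu),x_\nu)}{C_N(x)}\rE_{N-1}(x(\nu),(y_1-y_N,\ldots,y_{N-1}-y_N))$; I would bound these two contributions separately. The contribution of the last sum has in fact already been shown above to be of the required form $C(\de,b)P_N(\gamma|\re x_1|,\ldots,\gamma|\re x_N|)\exp(-\alpha\sum_j y_jv_j)$, by combining \eqref{MNexp}, the induction hypothesis \eqref{rENubou} with $N\to N-1$ for $\rE_{N-1}$, the telescoping majorization $\exp\big(\alpha\sum_{k=\nu}^{N-1}(y_k-y_N)(v_k-v_{k+1})\big)\le1$ valid under \eqref{xyas}, and the uniform boundedness, for $v_j-v_k\in[-a_s+\de,0]$ and $\re(x_j-x_k)\in\R$, of the $u$-factors making up $C_N(x(\nu),x_\nu)/C_N(x)$ (regular by \eqref{u} and I (A.11), bounded at the tails by \eqref{uasym}). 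So it remains to majorize the bracketed integrals in \eqref{rENrep2}, multiplied by $M_N(y)\exp(i\alpha y_N(x_1+\cdots+x_N))/\rho_N(y)$.

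Because the desired bound \eqref{rENubou} is covariant under the imaginary shifts $x\to x+i\lambda\mathbf{1}$ governed by \eqref{rENhom}, we may assume $v_1=0$, so that all $v_j$ lie in $[0,a_s-\de]$, and then fix $r\in(a_s-\de,a_s)$; this makes $r-v_j\in(0,a_s)$ for all $j$, which is what the $c$-function bound \eqref{GratBou} needs. For such $r$ the representation \eqref{rENrep2} continues to hold with $x$ complex and $v=\im x$ as in \eqref{xyas}: the left-hand side is holomorphic there since $(b,x)$ then lies in $D_{N,\beta}^l$ of Corollary \ref{Cor:EN}, and on the right-hand side one rechecks, as in the proof of Lemma \ref{Lem:auxN}, that shifting the $z$-contours from $\R$ up to $C_b+ir$ (one at a time) crosses only the $x$-dependent poles \eqref{psN}, that no $x$-independent singularity is met, and that $\rI_N$ and $\hat\rI_{N;\nu_1,\ldots,\nu_L}$ retain enough tail decay --- using the $c$-factors in $\cS_N^\sharp$ and $W_{N-1}$ together with Theorem \ref{Thm:ubound} for $N-1$.

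To bound the integral terms I would rerun, now for complex $x$, the estimates behind \eqref{INbou}--\eqref{hrINbou} in the proof of Theorem \ref{Thm:rENas}. Substituting $z_k\to z_k+i(a-b/2+r)$ and invoking \eqref{rENhom} with $N\to N-1$ turns $\int_{(C_b+ir)^{N-1}}dz\,\rI_N$ into $\rho_N(y)\exp\big(-\alpha r\sum_{m<N}(y_m-y_N)\big)C_N(x)^{-1}$ times $\int_{\R^{N-1}}dz\,\rE_{N-1}(z,\hat y)C_{N-1}(-z)^{-1}\prod_{j,k}c(z_k+ir-x_j)$, and similarly for the $\hat\rI_{N;\nu_1,\ldots,\nu_L}$-integrals, cf.\ \eqref{rINexpr}--\eqref{rhINexpr}. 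The factor $1/\rho_N(y)$ from the prefactor cancels the $\rho_N(y)$ produced here, and what remains of the $y$-dependence is $\exp\big(\alpha\sum_{k<N}(y_k-y_N)(v_k-r)\big)\le1$ (valid under \eqref{xyas} for the chosen $r$) times the wanted factor $\exp(-\alpha\sum_j y_jv_j)$ coming from $M_N(y)$. One then bounds $\rE_{N-1}$ (on $\R^{N-1}$, and on the shifted arguments $x_{\nu_\ell}+ia-ib/2$ in the $L$-sums) by the induction hypothesis \eqref{rENubou} for $N-1$, the $c$-functions by \eqref{GratBou}, \eqref{cBou} and \eqref{cbo}, and $\cU_{\nu_1,\ldots,\nu_L}(x)$ by the same $u$-estimates as above, and finally carries out the remaining $z$-integrals via Theorem \ref{Thm:IPLbou} exactly as in \eqref{PN-1exp}--\eqref{intPbou}, obtaining polynomials in $\gamma|\re x_j|$ of degree $\le N(N-1)/2-L$. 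Adding the two contributions and using continuity of all constants in $(\de,b)$ yields the stated bound.

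The step I expect to be the main obstacle is the second one: making \eqref{rENrep2} rigorous for complex $x$ and, within it, controlling the tail decay of $\rI_N$ and $\hat\rI_{N;\nu_1,\ldots,\nu_L}$ on the shifted contours when $v=\im x\ne0$. With $v_j\ne0$ the poles of the factors $c(z_k-x_j-ia+ib/2)$ of $\cS_N^\sharp$ move, so for $v$ in the full closed range allowed by \eqref{xyas} and for the chosen $r$ one must verify carefully that no $x$-independent pole is encountered during the single-contour shifts and that the $N-1$ numerator $c$-factors still dominate the $N-2$ denominator $c$-factors and the $\rE_{N-1}$-bound strongly enough for the integrals to converge absolutely and depend holomorphically on $x$. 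A secondary bookkeeping issue is the degree count: invoking \eqref{rENubou} for $N-1$ at $x_{\nu_\ell}+ia-ib/2$ brings in extra factors $P_{N-1}(\gamma|\re x_{\nu_1}|,\ldots)$ that must be absorbed into $P_N$ without pushing its degree past $N(N-1)/2$, which works because each residue removes $c$-factors in the corresponding proportion, just as in Theorem \ref{Thm:rENas}.
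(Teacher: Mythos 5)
Your proposal is correct and follows essentially the same route as the paper: it uses the representation \eqref{rENrep2} of Lemma \ref{Lem:auxN}, disposes of the last sum via \eqref{MNexp}--\eqref{ENbo}, the telescoping inequality and the $u$-bound \eqref{ubo}, normalizes $v$ via \eqref{rENhom}, continues to complex $x$ with $r$ chosen near $a_s$ (the paper takes $r=a_s-\de/2$, $\de'=\de/2$ to keep $r-v_j$ in a compact subinterval of $(0,a_s)$ and thereby uniform constants), and then majorizes the shifted integrals \eqref{rINexpr}, \eqref{rhINexpr} by the induction hypothesis, the $c$-bounds and Theorem \ref{Thm:IPLbou}. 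The points you flag as obstacles (pole avoidance and degree bookkeeping) are exactly the ones the paper settles via \eqref{vres2}--\eqref{rde} and \eqref{PN-1exp}--\eqref{intPbou}.
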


\begin{proof}
 Since we have already shown that the last sum in~\eqref{rENrep2}  satisfies a bound of this type, the assertion will follow once we prove that the integrals on the right-hand side of \eqref{rENrep2} are bounded by
\be\label{major}
C(\de,b)|\rho_N(b;y)|P_N(\gamma |\re x_1|,\ldots,\gamma |\re x_N|)\exp\Big(-\alpha\sum_{k=1}^{N-1}(y_k-y_N)v_k\Big),
\ee
for all $(x,y)\in\C^N\times\R^N$ satisfying \eqref{xyas}. Indeed, by the induction assumption, \eqref{rENubou} holds true with $N$ replaced by $N-1$, and when combined with the $c$-bound~\eqref{cbo}, it becomes clear that we can find a polynomial $P_N$ of the required form such that the remaining sum is majorized by \eqref{major} without the factor $|\rho_N(b;y)|$.

Due to the identity \eqref{rENhom}, we may and shall restrict attention to
\be\label{vres}
0\leq v_1\leq\cdots\leq v_N\leq a_s-\de.
\ee
Requiring at first $x\in\R^N$, we repeat the steps leading to the $(N-1)$-fold integral~\eqref{rINexpr}. Allowing next $v_j\neq 0$, we require
\be\label{vres2}
\de^\prime\leq r-v_j\leq a_s-\de^\prime,\ \ \ \de^\prime\in(0,a_s/2],\ \ j=1,\ldots,N,
\ee
so that we stay clear of the poles of the $c$-functions for $z_k+ir-x_j=0,a_s$. Choosing
\be\label{rde}
r=a_s-\de/2,\ \ \ \de^\prime=\de/2,
\ee
we can allow any $x\in\C^N$ satisfying \eqref{vres}. Invoking \eqref{rENubou} with $N\to N-1$ and the bounds \eqref{GratBou}--\eqref{cBou}, we thus infer
\begin{multline}
\left|\int_{(C_b+ir)^{N-1}}dz\, \rI_N(x,y,z)\right|\leq c_2(\de,b)|\rho_N(y)|\exp\Big(-\alpha r\sum_{k=1}^{N-1}(y_k-y_N)\Big)\\
\times \int_{\R^{N-1}}dz\, P_{N-1}(\gamma |z_1|,\ldots,\gamma |z_{N-1}|)\exp\big( F_{N-1}((\gamma \re x_1,\ldots,\gamma \re x_N),\gamma z)\big),
\end{multline}
where $c_2$ is continuous on $(0,a_s]\times S_l$. Using Theorem \ref{Thm:IPLbou} to bound the remaining integral, we arrive at the desired majorization.

We turn now to the $(N-1-L)$-fold integral~\eqref{rhINexpr}. Assuming \eqref{vres2}--\eqref{rde}, we can again allow any $x\in\C^N$ satisfying \eqref{vres}. Indeed, we stay clear of the pertinent poles of the $c$-functions and can use \eqref{rENubou} with $N\to N-1$ and $\de\to\de/2$ to bound the $\rE_{N-1}$-factor. Using also the bounds \eqref{GratBou} and~\eqref{cbo}, we obtain
\begin{multline}\label{hrINbou3}
\left|\int_{(C_b+ir)^{N-1-L}}dz_{>L}\, \hat{\rI}_{N;\nu_1,\ldots,\nu_L}(x,y,z_{>L})\right|<c_3(\de,b)|\rho_N(y)|\\
\times \exp\Big(-\alpha\sum_{j=1}^L(y_j-y_N)v_{\nu_j}-\alpha r\sum_{k=L+1}^{N-1}(y_k-y_N)\Big)\\
\times  \int_{\R^{N-1-L}}dz_{>L}\, P_{N-1}(\gamma |\re x_{\nu_1}|,\ldots,\gamma |\re x_{\nu_L}|,\gamma |z_{L+1}|,\ldots,\gamma |z_{N-1}|)\\
\times \exp\big( F_{N-1-L}(\gamma \re x(\nu_1,\ldots,\nu_L),\gamma z_{>L})\big),
\end{multline}
with $c_3$ continuous on $(0,a_s]\times S_l$. Now we have 
\be
v_{\nu_j}\geq v_j,  \ j=1,\ldots, L,\ \ \ \ r>v_j,\  j=1,\ldots,N, \ \ \  \ d_N(y)>0,
\ee
whence we infer
\be
\exp\Big(-\alpha\sum_{j=1}^L(y_j-y_N)v_{\nu_j}-\alpha r\sum_{k=L+1}^{N-1}(y_k-y_N)\Big)<\exp\Big(-\alpha\sum_{k=1}^{N-1}(y_k-y_N)v_k\Big).
\ee
Also, substituting the expansion \eqref{PN-1exp} with $x_{\nu_j}\to \re x_{\nu_j}$ in \eqref{hrINbou3}, each term is readily bounded using Theorem \ref{Thm:IPLbou}. Hence the majorization \eqref{major} results.
\end{proof}

\begin{appendix}

\section{Polynomial bounds}\label{AppA}
In Section \ref{Sec2} we use the following theorem to bound remainder terms when studying the asymptotic behavior of the functions $\rE_N$, cf.~Theorems \ref{Thm:rENas}--\ref{Thm:ubound}.

\begin{theorem}\label{Thm:IPLbou}
Let $z_1,\ldots,z_L,u_1,\ldots,u_{L+1}\in\R$, and let $\cP_{L,M}(|z_1|,\ldots,|z_L|)$ be a polynomial of degree~$M$ with positive coefficients. Setting
\be\label{IPL}
I_{\cP,L}(u_1,\ldots,u_{L+1})\equiv \int_{\R^L}dz\, \cP_{L,M}(|z_1|,\ldots,|z_L|)\exp( F_L(u,z)),
\ee
where
\be\label{FL}
F_L(u,z)\equiv \sum_{1\le m<n\le L+1}|u_m-u_n|+
\sum_{1\le m<n\le L}|z_m-z_n|-\sum_{j=1}^{L+1}\sum_{k=1}^{L}|u_j-z_k|,
\ee
we have a bound
\be\label{IPLbo}
I_{\cP,L}(u_1,\ldots,u_{L+1})<Q_{L,M}(|u_1|,\ldots,|u_{L+1}|),
\ee
where $Q_{L,M}$ is a polynomial of degree $\le M+L$ with positive coefficients.
\end{theorem}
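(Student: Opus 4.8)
The plan is to induct on $L$. The key structural observation is that the exponent $F_L(u,z)$ decouples nicely once the variables are ordered, so the strategy is first to reduce to an ordered configuration and then to peel off one integration variable at a time.

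\emph{Step 1: Reduction to ordered variables.} Since $\cP_{L,M}$ and the absolute values in $F_L$ are symmetric in the $z_k$ separately and in the $u_j$ separately, and since the total contribution of the integral is a sum over the at most $(L+1)!$ orderings of $z_1,\dots,z_L$ relative to $u_1,\dots,u_{L+1}$, it suffices to obtain the bound on each chamber. Concretely, fix the ordering $u_{\pi(1)}\le u_{\pi(2)}\le\cdots\le u_{\pi(L+1)}$ and integrate over the region where the $z_k$ are interlaced with the $u_j$'s in a prescribed pattern. On such a chamber all the absolute values $|u_m-u_n|$, $|z_m-z_n|$, $|u_j-z_k|$ open up linearly, and a telescoping computation shows that the bracketed quantity $\sum_{m<n}|u_m-u_n|+\sum_{m<n}|z_m-z_n|-\sum_{j,k}|u_j-z_k|$ collapses: each $z_k$ sitting in the gap between two consecutive ordered $u$'s contributes a bounded (in fact non-positive, up to a fixed additive constant depending only on the endpoints) linear form, so that $F_L(u,z)\le \sum_{k}\big(-2\,\mathrm{dist}(z_k,\{u_j\})\big)+\text{const}$, where the constant is at most a fixed multiple of $\max_j|u_j|$. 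This is the crucial sign: after reduction, the exponential decays in each $z_k$ away from the cluster $\{u_1,\dots,u_{L+1}\}$.

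\emph{Step 2: Integrate out one variable.} With $F_L$ replaced by such an upper bound, the $z_L$-integral (say) becomes $\int_{\R} \cP_{L,M}(|z_1|,\dots,|z_L|)\,\exp(-2\,\mathrm{dist}(z_L,\{u_j\}))\,dz_L$ against the remaining factors. Expanding $\cP_{L,M}$ in powers of $|z_L|$ with coefficients that are polynomials in $|z_1|,\dots,|z_{L-1}|$ (with positive coefficients), one uses the elementary estimate $\int_\R |z|^p\exp(-2\,\mathrm{dist}(z,\{u_j\}))\,dz < c_p\,(1+\max_j|u_j|)^p$ (the distance function is bounded below by $|z|-\max_j|u_j|$ for $|z|$ large, giving an exponentially convergent tail, and the finite part is controlled by $\max_j|u_j|$). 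This produces, after the $z_L$-integration, a new polynomial in $|z_1|,\dots,|z_{L-1}|,|u_1|,\dots,|u_{L+1}|$ with positive coefficients, of degree at most $M+1$, times $\exp(F_{L-1}(u,z(L)))$ up to absorbing the $z_L$-dependent part of the interlacing pattern into the new polynomial — and here one must check that the residual exponent in the remaining variables is still bounded above by an $F_{L-1}$-type expression plus a term linear in $\max_j|u_j|$. Summing the finitely many chambers reconstitutes the full integral over $\R^{L-1}$ of the same shape, so the induction hypothesis applies and yields $Q_{L-1,M+1}$, hence $Q_{L,M}$ of degree $\le (M+1)+(L-1)=M+L$ with positive coefficients, as claimed. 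The base case $L=0$ (or $L=1$) is the elementary estimate just quoted.

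\emph{Main obstacle.} The delicate point is the bookkeeping in Step 1–2: verifying that on each interlacing chamber the combination $F_L(u,z)$ really does collapse to (a constant depending linearly on $\max_j|u_j|$) minus a sum of one-dimensional ``distance to the cluster'' terms, and — more importantly — that after integrating out $z_L$ the leftover exponent, restricted to a chamber in the remaining variables, is again of the form $F_{L-1}(u,z(L))$ plus a controlled constant, so that the induction can be closed cleanly. This is a purely combinatorial-geometric identity about piecewise-linear functions on the braid arrangement; I expect it to require a careful but routine case analysis of which $u_j$'s flank $z_L$, and the positivity of all coefficients throughout is what makes the inductive estimate self-improving rather than merely self-referential.
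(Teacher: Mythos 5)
Your Step 1 estimate is true but fatally lossy, and this is exactly where the argument breaks. One can indeed show $F_L(u,z)\le -2\sum_k \mathrm{dist}(z_k,\{u_j\})+C_L\max_j|u_j|$, but the additive term cannot be dropped: already for $L=1$ and $u_2<z<u_1$ one has $F_1=0$ while $\mathrm{dist}(z,\{u_1,u_2\})>0$, and for $L=2$, $u=(A,0,-A)$, $z_1=z_2=A/2$ one gets $F_2=-A$ versus $-2\sum_k\mathrm{dist}(z_k,\{u_j\})=-2A$. Since this unavoidable constant sits in the \emph{exponent}, replacing $\exp(F_L)$ by your factorized bound produces a prefactor $\exp(C_L\max_j|u_j|)$, so the method can only yield a bound exponential in $\max_j|u_j|$, not the polynomial bound \eqref{IPLbo} (and an exponential bound would be useless where the theorem is applied in Section \ref{Sec2}, where polynomial growth in $\gamma|x_j|$ is essential). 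The same defect reappears in your Step 2: the hoped-for statement that after integrating out $z_L$ the residual exponent is ``$F_{L-1}$ plus a term linear in $\max_j|u_j|$'' is either false (there is no $-2\,\mathrm{dist}$ decay in $z_L$ on the chambers where $z_L$ lies between the $u_j$'s) or true only with that exponentially damaging additive term, so the induction does not close to give a polynomial $Q_{L,M}$.

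The repair is precisely the region splitting the paper uses, which your chamber decomposition blurs: order the $u_j$'s and split each $z_k$-integration into the two outer regions $(u_1,\infty)$, $(-\infty,u_{L+1})$ and the middle region $[u_{L+1},u_1]$. On an outer region one has an \emph{exact} decomposition (no additive error): for $z_k>u_1$, after shifting $z_k\to z_k+u_1$, the exponent equals $-(L+1)z_k+\sum_{l\ne k}\bigl(|z_k+u_1-z_l|-|u_1-z_l|\bigr)+F_{L-1}\bigl((u_2,\ldots,u_{L+1}),\hat z^k\bigr)\le -2z_k+F_{L-1}$, so $z_k$ can be integrated out at polynomial cost and the induction hypothesis (with $L$ replaced by $L-1$ and $u_1$ discarded) applies to the remaining variables. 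On the middle region no decay is claimed at all: one only uses $F_L\le 0$ together with the fact that $[u_{L+1},u_1]^L$ has volume polynomial in the $u_j$'s, which is the actual source of the degree increase from $M$ to $M+L$. Your proposal, by insisting on distance-to-cluster decay everywhere, has no mechanism to handle the middle region without the exponential fudge factor, so as written it does not prove the theorem.
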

\begin{proof}
We prove this by induction on~$L$. For $L=1$ we have
\be
I_{\cP,1}(u_1,u_2)=\int_{\R}dz\, \cP_{1,M}(|z|)\exp (|u_1-u_2|-|u_1-z|-|u_2-z|).
\ee
We have symmetry under swapping $u_1$ and~$u_2$, so we may take $u_2\le u_1$. We write the integral as the sum of three integrals over $(-\infty, u_2)$, $[u_2,u_1]$ and $(u_1,\infty)$, denoted by $I^-$, $I^{\mu}$ and $I^+$, resp. Then we have
\be
I^+=\int_{u_1}^{\infty}dz\, \cP_{1,M}(|z|)\exp(u_1-u_2-(z-u_1)-(z-u_2))=\int_0^{\infty}dz\, \cP_{1,M}(|z+u_1|)e^{-2z}.
\ee
Now we need only use $|z+u_1|\le z+|u_1|$ to see that $I^+$ is bounded by a polynomial of degree~$M$ in $|u_1|$ with positive coefficients.

Likewise, since
\be
I^-=\int_{-\infty}^{u_2}dz\, \cP_{1,M}(|z|)\exp(u_1-u_2-(u_1-z)-(u_2-z))=\int_{-\infty}^0dz\, \cP_{1,M}(|z+u_2|)e^{2z},
\ee
we infer that $I^-$ is bounded by a polynomial of degree~$M$ in $|u_2|$ with positive coefficients.

Finally, we have for the middle integral
\be
I^{\mu}=\int_{u_2}^{u_1}dz\, \cP_{1,M}(|z|)\exp(u_1-u_2-(u_1-z)-(z-u_2))= \int_{u_2}^{u_1}dz\, \cP_{1,M}(|z|),
\ee
and since we have
\be
\int_{u_2}^{u_1}dz\, |z|^k\le \frac{1}{k+1}\Big(|u_1|^{k+1}+|u_2|^{k+1}\Big),\ \ k\in\N,
\ee
we see that $I^{\mu}$ is bounded by a polynomial of degree~$M+1$ in $|u_1|, |u_2|$, with positive coefficients. Thus the assertion holds true for $L=1$.

Next, we inductively assume the assertion has been proved up to $L-1$, $L>1$. First, we claim that the function~$F_L(u,z)$~\eqref{FL} 
satisfies
\be\label{FLbo}
F_L(u,z)\le 0,\ \ \ \forall (u,z)\in\R^{L+1}\times\R^L.
\ee
Clearly, $F$ has permutation symmetry in $u_1,\ldots,u_{L+1}$ and in $z_1,\ldots,z_L$.  Therefore, we need only prove~\eqref{FLbo} under the assumptions  $z_L\le z_{L-1}\le \cdots \le z_1$ and 
\be\label{uas}
u_{L+1}\le u_L\le \cdots \le u_1. 
\ee
Then we have
\be
F_L(u,z)\le \sum_{1\le m<n\le L+1}(u_m-u_n)+
\sum_{1\le m<n\le L}(z_m-z_n)-\sum_{j=1}^{L+1}\Big(\sum_{j\le k}(u_j-z_k)+\sum_{j>k}(z_k-u_j)\Big)=0,
\ee
and so~\eqref{FLbo} follows.

We are now prepared to prove the bound~\eqref{IPLbo}. By permutation invariance of~$I_{\cP,L}(u)$, we need only show its validity under the assumption~\eqref{uas}. We write each $z_k$-integral as the sum of three integrals over $(-\infty, u_{L+1})$, $[u_{L+1},u_1]$ and $(u_1,\infty)$, denoted by $I^-$, $I^{\mu}$ and $I^+$, resp. We denote by~$\hat{z}^k$ the vector in~$\R^{L-1}$ arising by omitting the coordinate $z_k$ from $z\in\R^L$. Then we have
\begin{multline}
I_{\cP,L}(u)=\Big(\prod_{k=1}^N\big( I^-+I^{\mu}+I^+\big)dz_k \Big) \cP\exp(F_L)\\
< \sum_{k=1}^L\Big( I^- dz_k\int_{\R^{L-1}}d\hat{z}^k +I^+ dz_k\int_{\R^{L-1}}d\hat{z}^k\Big) \cP\exp(F_L)\\
+\prod_{k=1}^LI^{\mu}dz_k\, \cP\exp(F_L).
\end{multline}

Next, using the bound~\eqref{FLbo}, we note that the integral over $[u_{L+1},u_1]^L$ is bounded by a sum of terms of the form
\be
c\prod_{k=1}^LI^{\mu}dz_k\, |z_k|^{n_k},\ \ c>0,\ \ \sum_{k=1}^Ln_k\le M.
\ee
In turn, such a term is bounded by
\be
c_n\prod_{k=1}^L\big(|u_1|^{n_k+1}+|u_{L+1}|^{n_k+1}\big),\ \ c_n>0.
\ee
Hence the integral over $[u_{L+1},u_1]^L$ is majorized by a polynomial in~$|u_1|,|u_{L+1}|$ of degree $\le M+L$ with positive coefficients. 

We proceed to study the $z_k$-integral $I^+$.
We have $u_1<z_k$, so we may write $F_L$ as
\be
\sum_{j=2}^{L+1}(u_1-u_j)+\sum_{l\ne k}|z_k-z_l|-\sum_{j=1}^{L+1}(z_k-u_j)-\sum_{l\ne k} |u_1-z_l|+F_{L-1}^+((u_2,\ldots,u_{L+1}),\hat{z}^k).
\ee
Taking $z_k\to z_k+u_1$ in the integral, we then get
\be
e^{F_{L-1}^+}\int_0^{\infty}dz_k \, \cP(|z_1|,\ldots, |z_k+u_1|,\ldots, |z_L|)\exp\Big(-(L+1)z_k
+\sum_{l\ne k}(|z_k+u_1-z_l|-|u_1-z_l|)\Big).
\ee
Majorizing the exponential by $\exp(-2z_k)$, we can bound each monomial term as a polynomial in $|u_1|$ of degree $\le M$. The induction assumption now applies to the remaining $\hat{z}^k$-integrals over $\R^{L-1}$, yielding polynomials of the announced form. 

The $L$ integrals  $I^-dz_k$ can be estimated in a similar way, first writing $F_L$ as 
\be
\sum_{j=1}^{L}(u_j-u_{L+1})+\sum_{l\ne k}|z_k-z_l|-\sum_{j=1}^{L+1}(u_j-z_k)-\sum_{l\ne k} |u_{L+1}-z_l|+F_{L-1}^-((u_1,\ldots,u_{L}),\hat{z}^k),
\ee
and then taking $z_k\to z_k+u_{L+1}$. 
\end{proof}

\end{appendix}

\bibliographystyle{amsalpha}

\end{document}